\newtheorem{thm}{Theorem}
\newtheorem{dfn}{Definition}
\newtheorem{ex}{Example}
\newtheorem{coro}{Corrollary}
\newenvironment{proof}{\noindent{\bf Proof:}}{\hfill $\Box$ }
\newtheorem{claimnew}{Claim}
\newcounter{newct}
\newtheorem{asmp}{Assumption}
\newcommand{\bv}{\begin{array}}
\newcommand{\ma}{\mathcal A}
\newcommand{\calP}{\mathcal P}
\newcommand{\calV}{\mathcal V}
\newcommand{\ml}{\mathcal L}
\newcommand{\calT}{\mathcal T}
\newcommand{\calS}{\mathcal S}
\newcommand{\calN}{\mathcal N}
\newcommand{\ra}{\rightarrow}
\newcommand{\hist}{\text{Hist}}
\newcommand{\myparagraph}[1]{\vspace{2mm}\noindent {\bf\boldmath #1}}
\newcommand{\Omit}[1]{}
\newcommand{\piuni}{{\pi}_{\text{uni}}}
\newcommand{\conv}{\text{CH}}
\newcommand{\wmg}{\text{WMG}}
\newcommand{\rev}[1]{{\text{Rev}\left(#1\right)}}
\newcommand{\CC}{\text{\sc CC}}
\newcommand{\Par}{\text{\sc Par}}
\newcommand{\CnP}{\text{\sc C\&P}}
\newcommand{\HM}{\text{\sc HM}}
\newcommand{\MM}{\text{\sc MM}}
\newcommand{\CnS}{\text{\sc C\&S}}
\newcommand{\SP}{\text{\sc SP}}
\newcommand{\CnM}{\text{\sc C\&M}}
\newcommand{\CnH}{\text{\sc C\&H}}
\newcommand{\satmax}[2]{\widetilde{#1}_{#2}^{\max}}
\newcommand{\satmin}[2]{\widetilde{#1}_{#2}^{\min}}
\newcommand{\sat}[1]{{#1}}
\newcommand{\flip}[1]{{\text{\rm flip}}({#1)}}
\newcommand{\KT}{\text{KT}}
\title{Semi-Random Impossibilities of Condorcet Criterion}
\author{ Lirong Xia\\ 
Rensselaer Polytechnic Institute, Troy, NY 12180, USA,\\
xialirong@gmail.com}
\date{}
\begin{document}
\maketitle              

\begin{abstract}
The {\em Condorcet criterion} ($\CC$) is a classical and well-accepted criterion for voting. Unfortunately, it is incompatible with many other desiderata including {\em participation} ($\Par$), {\em half-way monotonicity} ($\HM$), {\em Maskin monotonicity} ($\MM$), and {\em strategy-proofness} ($\SP$). Such incompatibilities are often known as impossibility theorems, and are proved by worst-case analysis. Previous work has investigated the likelihood for these impossibilities to occur under certain models, which are often criticized of being unrealistic. 

We strengthen previous work by proving the first set of {\em semi-random impossibilities} for voting rules to satisfy $\CC$ and the more general, group versions of the four desiderata:  for any sufficiently large number of voters $n$, any size of the group $1\le B\le \sqrt n$, any voting rule $r$, and under a large class of {\em semi-random} models that include Impartial Culture, the likelihood for $r$ to satisfy $\CC$ and $\Par$, $\CC$ and $\HM$, $\CC$ and $\MM$, or $\CC$ and $\SP$ is $1-\Omega(\frac{B}{\sqrt n})$. This matches existing lower bounds for $\CnP$ ($B=1$) and $\CnS$ ($B\le \sqrt n$), showing that many commonly-studied voting rules  are already asymptotically optimal in such cases. 
\end{abstract}

\section{Introduction}
The {\em Condorcet criterion} of voting~\citep{Condorcet1785:Essai} is a classical desideratum that has  {\em ``nearly universal acceptance''}~\cite[p.~46]{Saari1995:Basic}. It requires  a voting rule to  choose the {\em Condorcet winner}---the alternative who beats other alternatives in head-to-head competitions---whenever it exists.  

Unfortunately, it is well-known that the {Condorcet criterion} ($\CC$ for short) is incompatible with many other desiderata (a.k.a.~{\em axioms}) when the number of alternatives $m$ is at least $3$. Such incompatibilities are often called {\em impossibility theorems}. For example, no voting rule satisfies 
\begin{itemize}
\item $\CC$ and {\em participation} ($\Par$ for short, which requires that no voter has incentive to abstain from voting), when $m\ge 4$~\citep{Moulin1988:Condorcets};
\item   $\CC$  and {\em half-way monotonicity} ($\HM$ for short, which requires that no voter has incentive to reverse his/her vote~\citep{Sanver2009:One-way});
\item  $\CC$ and {\em Maskin monotonicity} ($\MM$ for short, which requires that any voter raising the position of  the winner relative to other alternatives does not change the winner~\citep{Maskin1999:Nash}), as a special case  of  the Muller-Satterthwaite theorem~(\citeyear{Muller77:SPA}); or 
\item  $\CC$ and {\em strategy-proofness} ($\SP$ for short, which requires that no agent has incentive to lie), as  a special case of the Gibbard-Satterthwaite theorem~(\citeyear{Gibbard73:Manipulation,Satterthwaite75:Strategy}).\end{itemize}
The four combinations of axioms are therefore denoted by  $\CnP$, $\CnH$, $\CnM$, and  $\CnS$, respectively. 
Proofs for these impossibility theorems are based on {worst-case analysis}, by identifying a single instance of violation. Therefore, they do not preclude the possible that such violations are rare in practice. Indeed, if so,   {``\em then one need not be unduly worried''}~\citep{Pattanaik1978:Strategy}. 

Studying how rare such impossibilities are in practice has been a popular and active field of research~\citep{Gehrlein2011:Voting,Diss2021:Evaluating}. 
Recently, the topic was investigated using {\em smoothed analysis}~\citep{Spielman2009:Smoothed,Baumeister2020:Towards,Xia2020:The-Smoothed}, which  can be viewed as a worst average-case analysis under {\em semi-random} models~\citep{Feige2021:Introduction}, following the frequentists' principle: the likelihood of violation of axioms is estimated under an adversarially chosen (i.e., worst-case) distribution for the votes from a given set  of distributions. For example, 
the  likelihood for $\CC$ or $\Par$ to be violated is $\Theta(\frac{1}{\sqrt n})$ for many voting rules for $n$ voters, under a large class of semi-random models~\citep{Xia2021:Semi-Random}. 

While this is good news, as violations vanish at a $\Theta(\frac{1}{\sqrt n})$ rate, they are not rare enough when the cost of violation is high. 
For example, if a violation of $\CC$ or $\Par$  leads to a revote, whose social cost is $\Theta(n)$, then the expected social cost is $\Theta({\sqrt n})$, which is non-negligible. As another example, if 
 everyone complaints on social media about the violation and gets $-1$ utility every time when seeing a complaint, then the social cost can be as high as $\Theta(n^2)$, meaning that the  expected social cost is $\Omega(n^{1.5})$, or in other words, $\Omega(\sqrt n)$ {\em per person}. In such situations, voting rules with rarer  violations are desirable.

But can any voting rule do better, and if so, by how much? The answer  lies in the lower bound on the likelihood of violations (under all rules), or equivalently, the upper bound on the likelihood of satisfying the axioms. In this paper, we address this question for the four combinations of axioms involving $\CC$ mentioned earlier, by proving {\em semi-random impossibilities}~\citep{Xia2020:The-Smoothed} under a large class of  semi-random models that are more general and  realistic  than the commonly-used i.i.d.~uniform distribution, known as the {\em Impartial Culture (IC)}. Therefore, the research question of this paper can be phrased as: 
\begin{center}
{\bf What are the semi-random impossibilities of $\CC$?}
\end{center}
More precisely, we consider  the more general, group versions of $X\in \{\CnP, \CnH, \CnM,\CnS\}$. 
For any $B\ge 1$, any collection of votes $P$ (called a {\em profile}), and any voting rule $r$, we let $\sat{X}(r,P,B)=1$ if $r$ satisfies $\CC$ at $P$ and no group of at most $B$ voters in $P$ can collaboratively violate $X$; otherwise $\sat{X}(r,P,B)=0$. 
Then, given a set of distributions $\Pi$ over the votes and $n$ agents, the {\em semi-random version of $X$}~\citep{Xia2020:The-Smoothed,Xia2021:Semi-Random} is defined as: 
\begin{equation}
\label{dfn:s-sat}
\satmin{X}{\Pi}(r,n,B) \triangleq \inf\nolimits_{\vec\pi\in\Pi^n}\Pr\nolimits_{P\sim\vec\pi}\left(\sat{X}(r,P,B) =1\right)
\end{equation}
That is, $\satmin{X}{\Pi}(r,n,B)$ is the worst-case (lower bound) on the probability for $X$ to be $1$ under the profile $P$ generated from a  vector $\vec\pi$ of $n$ distributions in $\Pi^n$, one for each agent. Notice that while agents' votes are independently generated, their underlying distributions are adversarially chosen and can be different. A high $\satmin{X}{\Pi}$ value is desirable, because it  implies that the expected satisfaction of $X$ is high even under the worst distribution $\vec \pi\in \Pi^n$.  



\subsection{Our Contributions}
\label{sec:contributions}

The main results of this paper are  four semi-random impossibility theorems: for $X = \CnP$ (Theorem~\ref{thm:CC+Par}), $\CnH$ (Theorem~\ref{thm:CC+HM}), $\CnM$ (Theorem~\ref{thm:CC+MM}), or $\CnS$ (Theorem~\ref{thm:CC+SP}), any $m\ge 4$ ($m\ge 3$ for $\CnM$ and $\CnS$), any sufficiently large $n$, any $1\le B\le \sqrt n$,  any voting rule $r$, and any $\Pi$ satisfying certain conditions (Assumption~\ref{asmp:strict}), 
$$\satmin{X}{\Pi}(r,n,B) = 1- \Omega\left(\frac{B}{\sqrt n}\right)$$ 
In other words,   no voting rule can guarantee that $X$ is violated with probability smaller than  $\Omega(\frac{B}{\sqrt n})$.   
The results also imply that every additional member in the group (up to $\sqrt n$) roughly increase the likelihood  of violation by $\Theta(\frac{1}{\sqrt n})$. Specifically, when $B=\Omega(\sqrt n)$,  the likelihood of violation does not vanish even in large elections ($n\ra\infty$). 

Our results match the lower bound for $\CnP$ when $B=1$~\citep{Xia2021:Semi-Random} and  for $\CnS$ for every $B\le \sqrt n$~\citep{Xia2022:How-Likely}, which are achieved by many  voting rules that satisfies $\CC$, such as Copeland, maximin, ranked pairs, and Schulze---in contrast, for $\CnP$, positional scoring rules and STV are much worst, as their satisfactions are $1-\Theta(1)$~\citep{Xia2021:Semi-Random}. 

\myparagraph{Good  or bad news?} On the positive side,  it is the first time, to the best of our knowledge, that the optimal likelihood of avoiding impossibility theorems that involve $\CC$ is known. It is  surprising to us that many existing rules are already optimal. 
On the negative side, the tightness  suggests that there is little room for improvement, which can be a critical concern when the cost of violation is high. After all, we believe that these semi-random impossibility theorems are useful and informative  in theory, as they reveal limitations of the optimal rules, as well as in practice, for the decision maker to choose the voting rule   and decide the policies when a violation of axioms occurs. 

\myparagraph{Generality and limitations.}  The generality of the semi-random impossibilities proved in this paper largely depends on the  restrictiveness of Assumption~\ref{asmp:strict}. We defer the formal technical definition and  discussions to Section~\ref{sec:prelim}, and feel that the assumption is mild in practice, because it is satisfied by many single-agent preference models, including IC, the single-agent Mallows  and single-agent Plackett-Luce with bounded parameters~\citep{Xia2020:The-Smoothed}. As a result, the $1- \Omega (\frac{B}{\sqrt n} )$ upper bound  naturally holds under IC~(Corollary~\ref{coro:IC}). 

The major limitations are, first, the constant in $\Omega (\frac{B}{\sqrt n} )$ may be exponentially large in $m$, though it does not depend on $n$, $B$, or $r$. Second, the semi-random model in this paper assumes that the votes are statistically independent (but not necessarily identically distributed). These are common limitations/assumptions in preference modeling, see, e.g.,~\citep{Thurstone27:Law,Berry95:Automobile,Train09:Discrete}. Addressing them may require  breakthroughs in probability theory and are important and  challenging directions. 

\paragraph{\bf Proof overview.} The high-level idea is surprisingly simple: for each $X$ studied in this paper, in step 1, we leverage existing proof  of the (worst-case) impossibility theorem  to identify sufficiently many profiles where $X$ is violated. Then, in step 2, we prove that there exists $\vec\pi\in\Pi^n$ under which with $\Omega(\frac{B}{\sqrt n})$ probability, a profile falls in the set identified in  step 1.

Nevertheless, the actual calculations are technical challenging due to the generality of $r$. In step 1, we introduce a {\em rotated template} by scaling up an existing proof diagram (e.g.,~\citep[Chapter 1]{Peters2019:Fair}) to identify profiles where $X$ is violated, and prove that  there are sufficiently many such violation profiles by upper-bounding the number of times each of them is identified by the rotated template. Then in step 2, we use an averaging argument over all $n!$ permutations of a carefully chosen $\vec\pi$ to convert the problem to the likelihood about the histogram of profiles, which is then tackled  by  applying the point-wise concentration bound~\citep[Lemma~1]{Xia2021:How-Likely}. 

The  idea  and  techniques  have the potential to leverage other (worst-case)  impossibility theorems to  their semi-random versions. See Section~\ref{sec:other-imp} for more discussions.


\subsection{Related Work and Discussions}
\label{sec:related}
\noindent{\bf Condorcet criterion ($\CC$)} 
is satisfied by many commonly-studied voting rules. Prominent exceptions are positional scoring rules~\citep{Fishburn74:Paradoxes}  and multi-round-score-based elimination rules, such as STV.   Much previous work aimed at theoretically characterizing the {\em Condorcet efficiency}, which is the probability for the Condorcet winner to win  conditioned on its existence~\citep{Fishburn1974:Simple,Fishburn1974:Aspects,Paris1975:Plurality,Gehrlein1978:Coincidence,Newenhizen1992:The-Borda}. 
 
\myparagraph{\bf\boldmath Participation ($\Par$)}  was introduced to study voting rules that avoid the  {\em no-show paradox}~\citep{Fishburn1983:Paradoxes}. 
\citet{Moulin1988:Condorcets} proved that when $m\ge 4$ and $n\ge 25$, no voting rule satisfies  $\CC$ and $\Par$ simultaneously. The bound on $n$ was characterized to be $12$ by simplified, SAT-solver-based proofs~\citep{Brandt2017:Optimal,Peters2019:Fair}. 
The likelihood of $\Par$ satisfaction by popular voting rules under IC was 
investigated  in a series of work as summarized by~\citet[Chapter~4.2.2]{Gehrlein2011:Voting}, and also more recently by~\citet{Brandt2021:Exploring}.

\myparagraph{Half-way monotonicity ($\HM$)}  was introduced to study voting rules that avoid the {\em preference reversal paradox}, and was proved to be incompatible with $\CC$~\citep{Sanver2009:One-way}. \citet{Peters2017:Condorcets} used SAT solvers to characterized the number of voters under which the impossibility holds. 

\myparagraph{Maskin monotonicity ($\MM$)} was introduced to characterize Nash implementability~\citep{Maskin1999:Nash}. The Muller-Satterthwaite theorem~\citep{Muller77:SPA} establishes the equivalence between $\MM$ and $\SP$ in the worst-case sense: a voting rule satisfies $\MM$ if and only if it satisfies $\SP$. 

\myparagraph{Strategy-proofness ($\SP$)} cannot be satisfied by any non-dictatorial and unanimous voting rules when $m\ge 3$, due to the Gibbard-Satterthwaite theorem~\citep{Gibbard73:Manipulation,Satterthwaite75:Strategy}.  $\SP$ is stronger than $\HM$, because the latter uses a special form of manipulation (by reversing the truthful vote). At a high-level, $\Par$ can be viewed as a weak form of $\SP$ that prevents  manipulation by abstention, though $\Par$ is not  weaker than $\SP$ by definition, because $\Par$ reasons about elections of different sizes. 
A quantitative Gibbard-Satterthwaite theorem (under IC) was proved for $m=3$ by~\citet{Friedgut2011:A-quantitative}, and was subsequently developed in~\citep{Dobzinski08:Frequent,Xia08:Sufficient,Isaksson10:Geometry}, and the case for general $m$ was resolved by~\citet{Mossel2015:A-quantitative}. 

\myparagraph{\boldmath Semi-random $\CnP$, $\CnH$, $\CnM$, and $\CnS$.} We are not aware of any semi-random impossibility theorem  about the satisfaction of $\CnP$, $\CnH$, or $\CnM$, even under IC. For $\SP$, the quantitative Gibbard-Satterthwaite theorem by~\citet{Mossel2015:A-quantitative}  establishes an $1-\Omega(\frac{1}{n^{67}})$ upper bound under IC for any voting rule that is sufficiently different from  dictatorships. Therefore, the same bound holds for $\CnS$ for any rule that satisfies $\CC$. The $1-\Omega(\frac{B}{\sqrt n})$ upper bound for $\CnS$ in our Theorem~\ref{thm:CC+SP} also applies to all $\CC$ rules, which is stronger than the special case of~\citep{Mossel2015:A-quantitative}, because our bound is lower and works for every $B\le \sqrt n$  under   more general  models.

For possibility results (i.e., lower bound for  optimal rules), as discussed in Section~\ref{sec:contributions}, 
our results imply that the bounds are tight for $\CnP$ (when $B=1$) and for $\CnS$ (when $B\le \sqrt n$). We conjecture that they are tight for other axioms studied  in this paper with all $B\le \sqrt n$. 


\myparagraph {\bf Quantitative and semi-random impossibilities.} There is a large body of literature on quantitative  impossibility theorems in social choice under IC. For example,   quantitative  versions of Arrow's impossibility theorem~\citep{Arrow63:Social} were proved~\citep{Kalai02:Fourier,Mossel2012:A-quantitative,Keller2012:-A-tight,Mossel2013:On-reverse}. In judgement aggregation, \citet{Nehama2013:Approximately} and~\citet{Filmus2020:AND-testing} developed quantitative characterizations of AND-homomorphism as oligarchy, whose worst-case version was due to~\citet{List2002:Aggregating,List2004:Aggregating}. \citet{Xia2020:The-Smoothed} proved a semi-random version of the ANR impossibility theorem on {\em anonymity} and {\em neutrality}, whose worst-case version was proved by~\citet{Moulin1983:The-Strategy}.

\myparagraph {\bf Other smoothed/semi-random results.} Semi-random models have been widely adopted to analyze the performance of algorithms in practice in  combinatorial optimization~\citep{Blum1995:Coloring}, mathematical programming~\citep{Spielman2004:Smoothed}, machine learning~\citep{Blum2002:Smoothed}, and  algorithmic game theory \citep{Chung2008:The-Price,Psomas2019:Smoothed,Boodaghians2020:Smoothed,Blum2021:Incentive-Compatible}, etc. 
We refer the readers to recent surveys on semi-random models~\citep{Feige2021:Introduction} and general approaches beyond worst-case analysis~\citep{Roughgarden2021:Beyond}.  In addition to the work discussed above, semi-random/smoothed analysis has been applied to other  social choice problems, e.g., likelihood of ties~\cite{Xia2021:How-Likely}, complexity of winner determination~\citep{Xia2021:The-Smoothed}, judgement aggregation~\citep{Liu2022:Semi-Random}, and fair division~\citep{Bai2022:Fair}. 
 
\section{Preliminaries}
\label{sec:prelim}
For any  $q\in\mathbb N$, we let $[q]=\{1,\ldots,q\}$. Let $\ma=[m]$ denote the set of $m\ge 3$ {\em alternatives}. Let $\ml(\ma)$ denote the set of all linear orders over $\ma$. Let $n\in\mathbb N$ denote the number of voters (agents). Each voter uses a linear order $R\in\ml(\ma)$ to represent his or her preferences, called a {\em vote}, where $a\succ_R b$ means that the agent prefers alternative $a$ to alternative $b$. The vector of $n$ voters' votes, denoted by $P$, is called a {\em (preference) profile}, sometimes called an $n$-profile. 
A voting rule $r$ maps any profile to  a single winner.  
For any  profile $P$, let $\hist(P)\in {\mathbb R}_{\ge 0}^{m!}$ denote the anonymized version of $P$, also called the {\em histogram} of $P$, which contains the total number of each linear order in $\ml(\ma)$ according to $P$. 

\vspace{2mm}\noindent{\bf Weighted majority graphs and the Condorcet winner.}  For any profile $P$ and any pair of alternatives $(a,b)$, let $ P[a\succ b]$ denote the  number of votes in $P$ where $a$ is preferred to $b$. Let $\wmg(P)$ denote the {\em weighted majority graph} of $P$, whose vertices are $\ma$ and whose weight on edge $a\ra b$ is $w_P(a,b) = P[a\succ b] - P[b\succ a]$. 
The {\em Condorcet winner} of a profile $P$ is the alternative whose outgoing edges in $\wmg(P)$ are positively weighted. 


\myparagraph{\bf Condorcet criterion.}  All axioms studied in this paper are {\em per-profile axioms}~\citep{Xia2020:The-Smoothed}, each of which is modeled as  a function $\sat{X}$ that maps a voting rule $r$, a profile $P$, and a group size $B\ge 1$ to $\{0,1\}$, where $0$ (respectively $1$)  means that $r$ violates (respectively, satisfies) the axiom at  $P$ w.r.t.~group size $B$. Then, the classical (worst-case) satisfaction of the axiom under $r$ becomes $\min_{P} \sat{X}(r,P)$. 

For any voting rule, any profile $P$, and any $B\ge 1$,  {\em Condorcet criterion} is modeled as a function $\CC$ such that $\sat{\CC}(r,P,B)=1$ if and only if either (1) there is no Condorcet winner under $P$, or (2) the Condorcet winner is the winner of $P$ under $r$. Notice that technically $\CC$ does not depend on $B$, which is included for notational consistency. 

\myparagraph{Participation}  is modeled by a function $\Par$ such that 
$\sat{\Par}(r,P,B)=1$ if and only if no group of at most $B$ voters have incentive to abstain from voting; otherwise $\sat{\Par}(r,P,B)=0$. Formally, $\sat{\Par}(r,P,B)=0$ if and only if there exists a subvector $P'$ of $P$ such that for all $R\in P'$, $r(P')\succ_R r(P)$.  The simultaneous satisfaction of $\CC$ and $\Par$ is denoted by $\CnP$, such that $\sat{\CnP}(r,P,B)=1$ if and only if $\sat{\CC}(r,P,B)=1$ and $\sat{\Par}(r,P,B)=1$. 


\myparagraph{Half-way monotonicity} is modeled by a function $\HM$ such that 
$\sat{\HM}(r,P,B)=1$ if and only if no coalition of at most $B$ voter have incentive to flip their votes so that the outcome is more preferred to each of them; otherwise $\sat{\Par}(r,P,B)=0$. 

\myparagraph{Maskin monotonicity} is modeled by a function $\MM$ such that $\sat{\MM}(r,P,B)=1$ if and only if whenever no more than $B$ voters raise the position of the current winner relative to other alternatives, the winner stays the same;  otherwise $\sat{\MM}(r,P,B)=0$. More precisely, given two rankings $R_1,R_2\in\ml(\ma)$ and an alternative $a$, we say that the position of $a$ is raised in $R_2$ relative to other alternatives from $R_1$ if and only if the set of alternatives $a$ is preferred to in $R_1$ is a subset of the set of alternatives $a$ is preferred to in $R_2$, i.e.,
$$\{b:a\succ_{R_1} b\}\subseteq \{b:a\succ_{R_2} b\}$$
Notice that the inclusion does not need to be strict. That is, if the the set of alternatives $a$ is preferred to is unchanged, then we still say that the position of $a$ is raised in $R_2$ relative to other alternatives.

\myparagraph{Strategy-proofness} is modeled by a function $\SP$ such that $\sat{\SP}(r,P,B)=1$ if and only if no coalition of at most $B$ voter have incentive to change their votes so that the outcome is more preferred to each of them; otherwise $\sat{\SP}(r,P,B)=0$. 

\myparagraph{Group versions of the axioms.} For any $X\in \{\Par,\HM,\MM,\SP\}$, if $\sat{X}(r,P,B)=1$, then we also say that $P$ satisfies $X_B$.


\vspace{2mm}\noindent{\bf Proof diagram of the $\CnP$ impossibility.} We briefly recall the proof diagram by~\citet[Chapter 1]{Peters2019:Fair}  in Figure~\ref{fig:proof} to show that when $m=4$, no voting rule $r$ satisfies $\CnP$, which will play an important role in our proofs later. 
Each edge represents a sequence of operations, conditioned on the winner of the source profile being highlighted. \citet[Chapter 1]{Peters2019:Fair} proved that a violation of $\CC$ and/or $\Par$ exists in the diagram. Take the leftmost branch for example. If $r(P_0)\in \{1,2\}$, then two copies of $[1234]$ are added one by one. If the winner is no longer $1$ or $2$ during this process, then $\Par$ is violated. Otherwise, starting from $P_{\{1,2\}}$, if $r(P_{\{1,2\}}) = \{1\}$, then three votes of $[2431]$ are subtracted one by one. If the winner is not $1$ at any point, then $\Par$ is violated. However,$3$ is the Condorcet winner  in the leaf node, which means that $\CC$ is violated if $\Par$ has not been violated on the leftmost branch so far.

\begin{figure}[htp]
\centering  
\includegraphics[width = 1\textwidth]{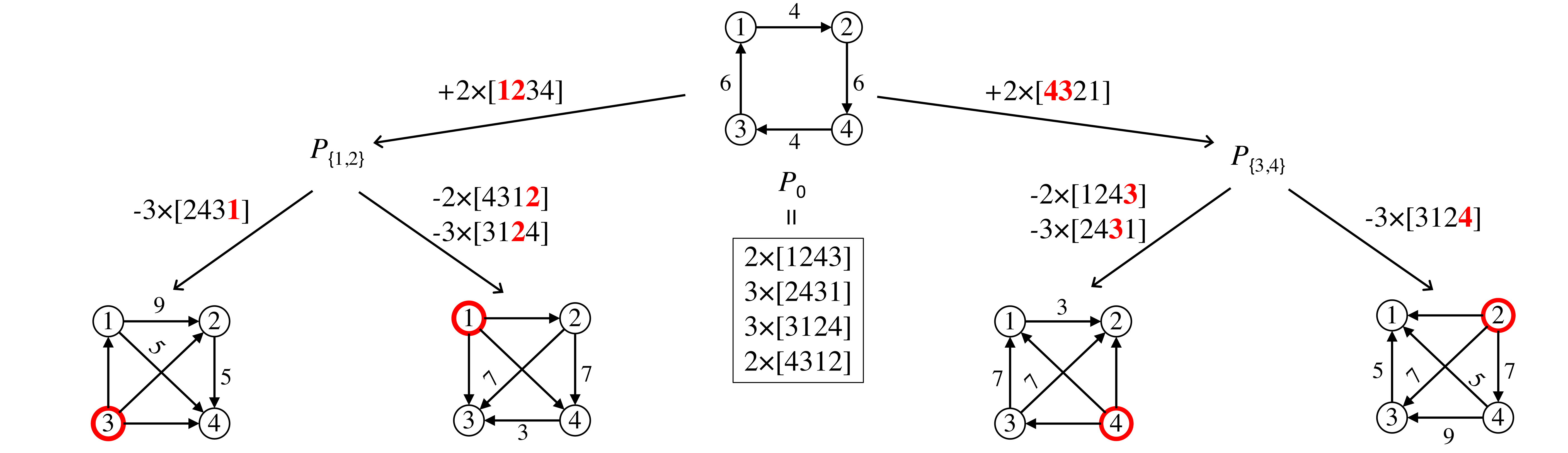}
\caption{Proof diagram of $\CnP$ impossibility~\citep[Chapter 1]{Peters2019:Fair}. WMGs of the root and the leaves are shown (where all unweighted edges in the leaves have weight $1$). Each edge represents a sequence of operations conditioned on the winner of the source profile being a highlighted alternative.  Condorcet winners in the leaf nodes are highlighted.\label{fig:proof}}
\end{figure}

\vspace{2mm}\noindent{\bf Semi-random satisfaction of axioms.}   Given a per-profile axiom $X$, a set $\Pi$ of distributions over $\ml(\ma)$, a voting rule $r$, $n\in\mathbb N$, and a group size $B$, the {\em semi-random satisfaction of $X$} under $r$ with $n$ agents, denoted by $\satmin{X}{\Pi}(r,n, B)$, is defined in Equation~(\ref{dfn:s-sat}) in the Introduction. The ``min'' in the superscript means that the adversary aims at minimizing the satisfaction of $X$. 

The semi-random analysis generalizes the classical quantitative analysis in social choice (under IC). To see this, let $\piuni$ denote the uniform distribution over $\ml(\ma)$ and let  $\Pi_{\text{IC}}=\{\piuni\}$. Then,  $\satmin{X}{\Pi_{\text{IC}}}$ becomes the likelihood of satisfaction of $X$ under IC. 
Throughout the paper, we make the following  assumptions on $\Pi$.

\begin{asmp}
\label{asmp:strict} We assume that $\Pi$ is 
\begin{itemize}
\item {\em strictly positive}, which means that there exists $\epsilon>0$ such that for every $\pi\in\Pi$ and every $R\in\ml(\ma)$, $\pi(R)\ge \epsilon$;
\item {\em closed}, which means that $\Pi$ is a closed subset of the probability simplex in $\mathbb R^{m!}$; and 
\item $\piuni\in\conv(\Pi)$, where $\conv(\Pi)$ is the convex hull of $\Pi$.
\end{itemize}
\end{asmp}
The first part of Assumption~\ref{asmp:strict} requires that no distribution in $\Pi$ is too ``deterministic''. The second part is a mild technical assumption. The first two parts guarantee that the semi-random analysis using $\Pi$ is sufficiently different from the worst-case analysis.  The third part requires that the uniform distribution $\piuni$ is in the convex hull of $\Pi$, though $\piuni$ itself may not be in $\Pi$. 

We believe that Assumption~\ref{asmp:strict} is mild, because it is satisfied by many classical models for preferences. For example, it is satisfied by IC, which corresponds to $\Pi=\{\piuni\}$. The following example is taken from~\cite{Xia2020:The-Smoothed}. 

\begin{ex}
\label{ex:Pi} In the {\em single-agent Mallows with bounded dispersion}, given $\underline\varphi>0$, each distribution is parameterized by a central ranking $W\in\ml(\ma)$ and a dispersion $\varphi\in [\underline\varphi,1]$, such that the probability for  $R\in \ml(\ma)$ is proportional to $\varphi^{\KT(R,W)}$, where $\KT(R,W)$ is the total number of  pairwise differences between $R$ and $W$, i.e., the {\em Kendall-Tau distance}.

In the {\em single-agent Plackett-Luce with bounded parameters}, given $\underline\varphi>0$, each distribution is parameterized by a vector $\vec \theta\in [\underline\varphi,1]^m$ such that $\vec \theta\cdot \vec 1 = 1$. The probability  for  $R=a_1\succ a_2\succ\cdots\succ a_m$ is $\prod_{i=1}^{m-1}({\theta_{a_i}}/{\sum_{\ell=i}^m \theta_{a_\ell}})$. 

For any $ \underline{\varphi}>0$, both models in the example satisfy Assumption~\ref{asmp:strict}. 
\end{ex}
If  $ \underline{\varphi}=0$ is allowed in Example~\ref{ex:Pi}, then the semi-random analysis degenerates to worst-case analysis, which trivializes the question.

 \section{Semi-Random Impossibility of  $\CC$ and $\Par$}
\label{sec:impossibility}
 

\begin{thm}[\bf \boldmath $\CC$+Participation]
\label{thm:CC+Par}
For any fixed $m\ge 4$, any  $\Pi$ that satisfies Assumption~\ref{asmp:strict}, any voting rule $r$, any $n\ge 12$,  and any $1\le B\le \sqrt n$,  
$$\satmin{\CnP}{\Pi}(r,n, B) = 1- \Omega\left(\frac{B}{\sqrt n}\right)$$
\end{thm}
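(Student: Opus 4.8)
The plan is to produce, for each $n$ and $B$, a \emph{single} distribution vector $\vec\pi\in\Pi^n$ under which $r$ violates $\CnP$ with probability $\Omega(B/\sqrt n)$. Because $\satmin{\CnP}{\Pi}(r,n,B)$ is an infimum over $\Pi^n$, and every axiom here is anonymous (so the violation event depends on $P$ only through $\hist(P)$), exhibiting one such $\vec\pi$ immediately yields the upper bound $1-\Omega(B/\sqrt n)$ on the satisfaction probability. The argument follows the two steps of the proof overview.

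\emph{Step 1 (a thick slab of violation histograms).} I would begin from the finite diagram of Figure~\ref{fig:proof}, which certifies for $m=4$ and any $r$ that some profile in a constant-size cluster violates $\CC$ or $\Par$; for $m>4$ I would freeze the extra $m-4$ alternatives at the bottom of every vote so that they are never Condorcet winners and the four active alternatives reproduce the $m=4$ argument. Each edge of the diagram is the addition or removal of at most a constant number of identical votes, so whenever $B$ exceeds that constant a whole edge is realizable as a single group deviation of size $\le B$; moreover the diagram can be \emph{scaled} by a factor $\Theta(B)$, replacing each unit operation by a group of $\Theta(B)$ votes. The critical quantity is the margin $w_P(a,b)=P[a\succ b]-P[b\succ a]$ on the edge of $\wmg(P)$ that the scaled group deviation must flip: a violation becomes available exactly when this margin lies in a window of width $\Theta(B)$, while the remaining margins only need correct signs to support the diagram. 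I would package this as a ``rotated template'': a map sending a background profile together with an integer offset $t$ with $|t|\le\Theta(B)$ to a size-$n$ violation profile, and then lower-bound the number of \emph{distinct} such histograms by upper-bounding the multiplicity with which the template can output any fixed histogram. Geometrically the output set is a slab of thickness $\Theta(B)$ (the range of the critical margin) and full extent $\Theta(\sqrt n)$ in the remaining directions, hence contains $\Omega\!\left(B\cdot n^{(m!-2)/2}\right)$ distinct histograms in the likely region.

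\emph{Step 2 (concentration on the slab).} Since $\piuni\in\conv(\Pi)$, write $\piuni=\sum_j\lambda_j\pi^{(j)}$ with $\pi^{(j)}\in\Pi$ and assign $\pi^{(j)}$ to a $\lambda_j$-fraction of the $n$ voters; averaging the anonymous violation probability over all $n!$ relabelings of the voters leaves it unchanged and replaces $\vec\pi$ by an exchangeable distribution whose per-voter marginal is $\piuni$, i.e.\ essentially Impartial Culture. The expected histogram is then $\frac{n}{m!}\vec{1}$, all pairwise margins are centered at $0$, and strict positivity of $\Pi$ keeps the per-voter covariance non-degenerate. Writing the violation probability as $\sum_{\vec h\in\text{slab}}\Pr_{P\sim\vec\pi}[\hist(P)=\vec h]$ and invoking the point-wise concentration bound of~\citep[Lemma~1]{Xia2021:How-Likely}, each term near the mode is $\Theta\!\left(n^{-(m!-1)/2}\right)$; summing over the $\Omega(B\cdot n^{(m!-2)/2})$ slab histograms of Step~1 gives total mass $\Omega(B/\sqrt n)$. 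Equivalently, integrating out the $m!-2$ unconstrained directions, the critical margin has marginal density $\Theta(1/\sqrt n)$ and its width-$\Theta(B)$ window, intersected with the constant-probability event that the surrounding signs support the diagram, captures $\Theta(B/\sqrt n)$. The hypothesis $B\le\sqrt n$ keeps the slab inside the $\Theta(\sqrt n)$ fluctuation window where this local estimate is valid, and $n\ge 12$ is inherited from the diagram.

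I expect Step~1 to be the main obstacle. Because $r$ is an arbitrary (adversarial) rule, the branch of the diagram that actually fires is not under our control, so the template must guarantee that \emph{every} profile it outputs lands on \emph{some} violating branch regardless of $r$, while simultaneously (i) respecting the $\wmg$ sign constraints of the diagram on a constant fraction of backgrounds, (ii) being realizable by group deviations of size at most $B$ after the $\Theta(B)$-scaling, and (iii) mapping to each histogram with only bounded multiplicity, so that the count of distinct violation histograms is genuinely $\Omega(B\cdot n^{(m!-2)/2})$. Reconciling these three requirements is the delicate combinatorial core; given it, Step~2 is a fairly standard application of the local limit theorem of~\citep{Xia2021:How-Likely}.
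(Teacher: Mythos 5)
There are two genuine gaps, and each one breaks the argument on its own.

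First, your reduction to histogram events is invalid because the theorem quantifies over \emph{all} voting rules, not only anonymous ones. For a non-anonymous $r$, $\sat{\CC}(r,P,B)$ and $\sat{\Par}(r,P,B)$ are not functions of $\hist(P)$: whether $r$ selects the Condorcet winner depends on which voter holds which vote. So writing the violation probability as $\sum_{\vec h\in\text{slab}}\Pr_{P\sim\vec\pi}[\hist(P)=\vec h]$, and likewise counting violation \emph{histograms} rather than violation \emph{profiles}, proves at best a statement about anonymous rules. The paper's machinery exists precisely to avoid this: the violation trees track the identities of the flipped voters (the proof of Claim~\ref{claim:tree-violation} names indices $n_1,\ldots,n_B$), the counting in Claims~\ref{claim:vio-tree-num}--\ref{claim:num-rotated-trees} bounds numbers of \emph{profiles}, and Step~2 does not assume the violation event is exchangeable; it proves that among the $n!$ vectors $\eta(\vec\pi)$ at least one satisfies $\Pr_{P\sim\eta(\vec\pi)}(P\in\calV_{n,B})=\Omega(B/\sqrt n)$, with histograms entering only through Claim~\ref{claim:hist-conversion} combined with the cardinality bound $|\calV_{n,B}|=\Omega(\frac{B}{\sqrt n})\,|\calP_n|$. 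You explicitly defer this counting as ``the delicate combinatorial core,'' but it is the heart of the proof, not a detail, and your histogram-level multiplicity bound is not a substitute for it.

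Second, the $\Theta(B)$-scaled, codimension-one ``slab'' geometry is wrong even granting anonymity. The diagram of Figure~\ref{fig:proof} must produce, for \emph{every} possible root winner, a branch ending in a leaf with a designated Condorcet winner, and distinct branches end in leaves with \emph{distinct} Condorcet winners. If $a$ and $c$ are Condorcet winners of two different leaves, then $w(a,c)$ must be positive at one leaf and negative at the other; since your operations move any margin by only $O(B)$, this forces the two-sided constraint $|w_{\text{root}}(a,c)|=O(B)$ for every such pair --- not just one ``critical'' margin with sign conditions elsewhere. In the diagram used here all four alternatives occur as leaf Condorcet winners, so all six margins must sit in $O(B)$-windows, and your root set has probability $O\bigl((B/\sqrt n)^{6}\bigr)$, far below the target. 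If instead you let the other margins take their natural $\Theta(\sqrt n)$ values, the diagram simply stops working: when some alternative beats all others by $\Theta(\sqrt n)$, a rule that picks the Condorcet winner throughout the $O(B)$-neighborhood yields no violation there, so no template output is guaranteed to violate anything. The paper resolves this tension by scaling the \emph{whole} diagram by $\sqrt n$ (Definition~\ref{dfn:vio-diagram}: each original edge becomes $\sqrt n/B$ flips of $B$ votes), so that $\calP_n$ pins every margin at its natural $\Theta(\sqrt n)$ scale and has constant probability; the factor $B/\sqrt n$ then comes not from a thin slab but from pigeonholing one violation into each tree of $\Theta(\sqrt n/B)$ profiles, together with the rotated-template multiplicity bound of Claim~\ref{claim:num-rotated-trees}. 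Your final arithmetic ($\Omega(B\cdot n^{(m!-2)/2})$ histograms times $\Theta(n^{-(m!-1)/2})$ each) is internally consistent, but the set to which you apply it has not been shown to exist, and with the diagram you start from it does not.
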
 
The theorem is more general than its classical, worst-case counterpart, as the likelihood is strictly smaller than $1$. It is also more general than its quantitative counterparts (under IC), because the latter is a special case of the former, where $\Pi=\{\piuni\}$, as discussed in the last section. 

\begin{proof} To illustrate the idea, we make the following assumptions in the proof sketch: (i) $m=4$, (ii) $\sqrt n$ is an integer, (iii) $B\mid \sqrt n$, and (iv) $m!\mid n$. Then, we modify the proof for the general case.  It suffices to prove the theorem when $n\ge 12$ and is sufficiently large, because the (worst-case) impossibility theorem holds for every $n\ge 12$~\citep{Brandt2017:Optimal,Peters2019:Fair}. 

\myparagraph{Overview.}  Let $\Par_B$ denote the group version of participation with size $B$. Instead of upper-bounding $\satmin{\CnP}{\Pi}$, we will lower-bound its complement $\satmax{\neg(\CnP)}{\Pi}$ as $\Omega(\frac{B}{\sqrt n})$, which is the max-semi-random likelihood for $\CC$ or $\Par_B$ to be violated and is defined similarly to $\satmin{X}{\Pi}$ in~\eqref{dfn:s-sat}, except that $\inf$ is replaced by $\sup$. The theorem  then follows after noticing  
$$\satmin{\CnP}{\Pi}(r,n, B) = 1 - \satmax{\neg(\CnP)}{\Pi}(r,n, B)$$


As discussed in Section~\ref{sec:contributions},  the proof proceeds in two steps. In step 1, we identify a set of $n$-profiles, denoted by $\calV_{n,B}$, where $\CC$ or $\Par_B$ is violated, and prove that $\calV_{n,B}$ contains sufficiently many profiles. This will be achieved by first scaling the (worst-case) proof diagram in~\citep[Chapter 1]{Peters2019:Fair}, i.e., Figure~\ref{fig:proof} in Section~\ref{sec:prelim}, by a factor of $\sqrt n$ to define a {\em violation template}, and then implementing it at profiles whose histograms are in an $O(\sqrt n)$ neighborhood of $\frac{n}{m!}\cdot \vec 1$. Each implementation leads to a {\em violation tree}, which contains at least one violation of $\CC$ or $\Par_B$. Then, we upper-bound the number of violation trees any  profile $P^*\in\calV_{n,B}$ can be on, by considering the {\em rotated trees} generated by the {\em rotated template} rooted at $P^*$.

Then in step 2, we prove that there exists $\vec\pi\in\Pi^n$ so that the likelihood of $\calV_{n,B}$ found in step 1 is lower-bounded by $\Omega(\frac{B}{\sqrt n})$. This is achieved by starting with a $\vec\pi\in\Pi^n$ such that $\sum_{j=1}^n \pi_j$ is $O(1)$ away from $\frac{n}{m!}\cdot \vec 1$, and then considering the sum of likelihood of $\calV_{n,B}$ under all $n!$ permutations of components in $\vec \pi$. This converts the likelihood of $\calV_{n,B}$ to the likelihood about the histogram of a randomly-generated profile. Finally, we apply the point-wise concentration bound~\citep[Lemma~1]{Xia2021:How-Likely} to derive the desired lower bound. 

\myparagraph{Step 1.} We first formally define the violation template illustrated in Figure~\ref{fig:violation-template}.

\begin{figure}[htp]
\centering  
\includegraphics[width = \textwidth]{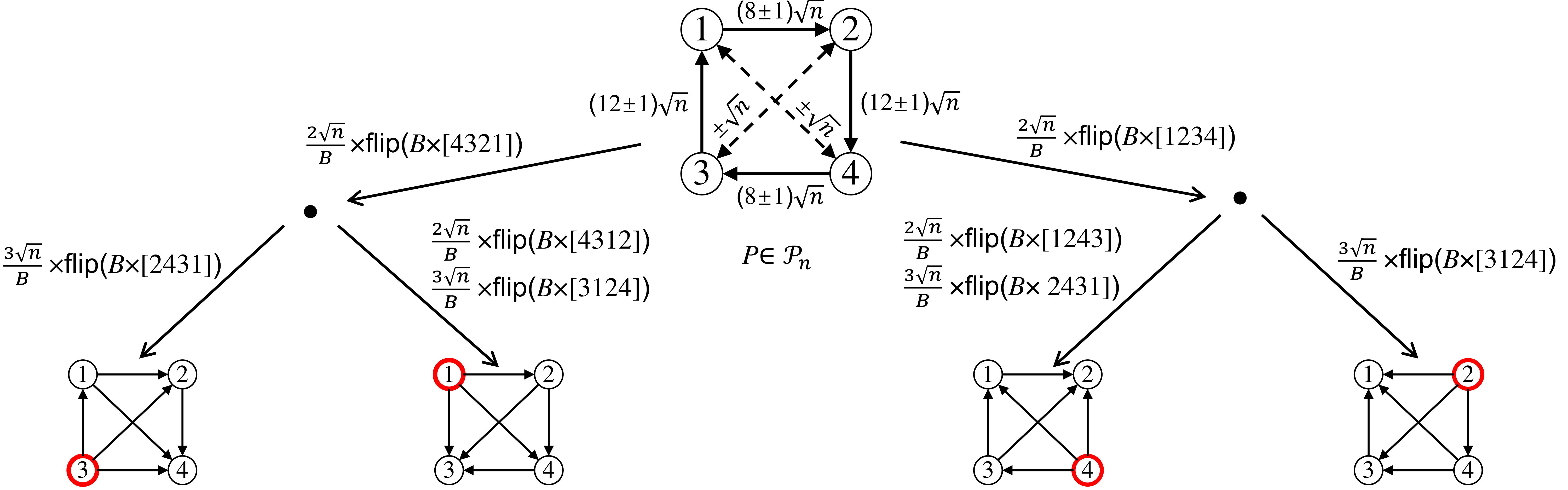}
\caption{The violation template. ``$\pm$'' represents the  upper and lower bounds on the weight. $\ell\times \text{\rm flip}(B\times R)$ represents $\ell$ operations, each of which flips $B$ votes that are $R$. In the leaves, weights are not shown and Condorcet winners  are highlighted. \label{fig:violation-template}}
\end{figure}


\begin{dfn}[{\bf Violation template}{}]
\label{dfn:vio-diagram}
Given any $n$-profile $P$ with at least $7\sqrt n $ copies of $\ml(\ma)$ and any $1\le B\le \sqrt n$, a {\em violation template}  is defined by modifying the proof diagram (Figure~\ref{fig:proof}) as follows, where $\rev{R}$ denote the reverse ranking of $R$:
\begin{itemize}
\item every $+R$ operation on an edge in Figure~\ref{fig:proof} is replaced by a sequence of $\frac{\sqrt n}{B}$ operations, each of which flips $B\times \rev{R}$ votes and is denoted by $\flip{B\times \rev{R}}$;
\item every $-R$ operation on an edge in Figure~\ref{fig:proof} is replaced by a sequence of $\frac{\sqrt n}{B}$ operations, each of which flips $B \times  R$ votes  and is denoted by $\flip{B\times R}$.
\end{itemize}
\end{dfn}
The violation template will be implemented multiple times, by letting its root to be  $n$-profiles whose WMGs are similar to the WMG at the root in Figure~\ref{fig:proof} (scaled by a factor of $\sqrt n$) and whose histograms are close to $\frac{n}{m!}\cdot \vec 1$. Formally, we define the set of such profiles $P$, denoted by $\calP_n$, as follows. Let $w(e)$ denote the weight on 
edge $e$ in the root of Figure~\ref{fig:proof}.

\begin{dfn}\label{dfn:Pn}
Let  $\calP_n$ denote the set of $n$-profiles $P$ such that
\begin{itemize}
\item for every edge $e\in [4]\times [4]$, $|w_P(e) - \sqrt n \cdot w(e)|\le \sqrt n$;
\item  $|\hist(P)-\frac{n}{m!}\cdot\vec 1|_\infty\le 4 \sqrt n$.
\end{itemize}
\end{dfn}

We note that any flip operation in the violation template  does not completely specify the set of voters whose votes should be flipped (except that their votes must be the same as the designated ranking). Consequently, different combinations of votes when implementing  the violation template   lead to different {\em violation trees}, formally defined as follows.
\begin{dfn}[{\bf Violation trees}{}]
\label{dfn:vio-tree}
For any $P\in\calP_n$  and any $1\le B\le \sqrt n$, a {\em violation tree} is a tree of $\frac{20\sqrt n}{B}+1$ profiles obtained from implementing the violation template rooted at $P$. Let $\calT_{P,B}$ denote the set of all violation trees rooted at $P$.
\end{dfn}
For example, in a violation tree rooted at $P\in\calP_n$, the  left branch of Figure~\ref{fig:violation-template} contains $\frac{5\sqrt n}{B}+1$ profiles. The first profile is $P$, and each of the following $\frac{2\sqrt n}{B}$ profiles are obtained from the previous one by flipping $B$ votes of $[4321]$. Then, the branch continues with $\frac{3\sqrt n}{B}$ profiles, each of which is obtained from the previous profile by flipping $B$ votes of $[2431]$.  The following claim lower-bounds the size of $\calT_{P,B}$. 
\begin{claimnew}
\label{claim:vio-tree-num}
For any $P\in \calP_n$ and any $1\le B\le \sqrt n$,  $|\calT_{P,B}| = \Omega\left(\left(\frac{n}{m!  \sqrt[B]{B!}}\right)^{20\sqrt n}\right)$.
\end{claimnew}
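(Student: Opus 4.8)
The plan is to count $|\calT_{P,B}|$ directly as a product, over the edges of the violation template, of the number of admissible choices of which $B$ voters to flip on each edge. The crucial observation is that, although a flip operation $\flip{B\times R}$ does not specify \emph{which} $B$ voters of ranking $R$ are flipped, the \emph{count} of each ranking at the source profile of any edge is a deterministic function of the path from the root in the template (flipping any $B$ votes of $R$ changes the histogram in exactly the same way), and does not depend on the particular voters chosen earlier. Hence if $k_e$ denotes the number of voters holding the relevant ranking at the source of edge $e$, the choices on distinct edges are independent and
$$|\calT_{P,B}| = \prod_{e}\binom{k_e}{B},$$
where the product ranges over all $\frac{20\sqrt n}{B}$ edges of the template (a tree with $\frac{20\sqrt n}{B}+1$ nodes). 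Since the voters are labeled, distinct choice-assignments yield distinct violation trees, so this is an exact count rather than a bound.

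First I would establish the key invariant: every ranking count at every node of every violation tree rooted at $P$ is $\frac{n}{m!}\pm O(\sqrt n)$. This follows because $P\in\calP_n$ gives $|\hist(P)-\frac{n}{m!}\cdot\vec 1|_\infty\le 4\sqrt n$ at the root, and along any root-to-node path there are at most $\frac{20\sqrt n}{B}$ flip operations, each changing any single ranking's count by at most $B$; thus each count moves by at most $\frac{20\sqrt n}{B}\cdot B = 20\sqrt n$, so it stays within $\frac{n}{m!}\pm O(\sqrt n)$. For $n$ large this is $\ge \frac{n}{2m!}>0$, so each $\binom{k_e}{B}$ is well defined and the product is positive. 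Next I would lower-bound each factor using $k_e\ge \frac{n}{m!}-O(\sqrt n)$ and $B\le\sqrt n$:
$$\binom{k_e}{B}\ \ge\ \frac{(k_e-B+1)^B}{B!}\ \ge\ \frac{\bigl(\tfrac{n}{m!}-O(\sqrt n)\bigr)^B}{B!}.$$
Multiplying over the $\frac{20\sqrt n}{B}$ edges collapses the exponents to
$$|\calT_{P,B}|\ \ge\ \frac{\bigl(\tfrac{n}{m!}-O(\sqrt n)\bigr)^{20\sqrt n}}{(B!)^{20\sqrt n/B}}\ =\ \Bigl(\tfrac{n}{m!}\Bigr)^{20\sqrt n}\Bigl(1-O\bigl(\tfrac{m!}{\sqrt n}\bigr)\Bigr)^{20\sqrt n}\bigl(\sqrt[B]{B!}\bigr)^{-20\sqrt n}.$$
(One may instead invoke the telescoping identity $\prod_{i=0}^{t-1}\binom{k_0-iB}{B}=\frac{k_0!}{(k_0-tB)!\,(B!)^{t}}$ on each maximal single-ranking block of the template, which yields the same expression; the uniform per-edge bound above suffices.)

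The step I expect to be the main obstacle is controlling the error factor $\bigl(1-O(\tfrac{m!}{\sqrt n})\bigr)^{20\sqrt n}$: the exponent grows with $n$, so naive bounding could make it vanish. The resolution is that the $B$-dependence cancels exactly — the error arises from $20\sqrt n$ linear factors in total, not from $\frac{20\sqrt n}{B}$ binomials treated separately — so $\bigl(1-O(\tfrac{m!}{\sqrt n})\bigr)^{20\sqrt n}\to e^{-O(m!)}=:c_m>0$, a constant independent of $n$ and $B$ (though, consistent with the paper's stated limitation, exponentially small in $m$). Since $(B!)^{20\sqrt n/B}=(\sqrt[B]{B!})^{20\sqrt n}$, this gives $|\calT_{P,B}|\ge c_m\bigl(\tfrac{n}{m!\sqrt[B]{B!}}\bigr)^{20\sqrt n}=\Omega\!\bigl(\bigl(\tfrac{n}{m!\sqrt[B]{B!}}\bigr)^{20\sqrt n}\bigr)$, as claimed. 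A minor bookkeeping point to verify along the way is that the template's edge count and the $m!$ ranking counts carry over to general $m\ge 4$ (the extra alternatives are held fixed below the four on which the diagram acts), so that $n/m!$ is the correct per-ranking scale.
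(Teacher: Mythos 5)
Your proposal is correct and takes essentially the same route as the paper's own proof: both lower-bound the number of voter choices for each flip operation by $\binom{n_R}{B}\ge \frac{\left(\frac{n}{m!}-O(\sqrt n)\right)^B}{B!}$ using the $O(\sqrt n)$ histogram concentration along the tree, raise this to the $\frac{20\sqrt n}{B}$ operations, and absorb the $\left(1-O\!\left(\frac{m!}{\sqrt n}\right)\right)^{20\sqrt n}$ error factor into a positive constant depending only on $m$. Your extra observations (the histogram at each template node is deterministic, so the product over edges is an exact count, and distinct voter choices give distinct trees) are refinements the paper leaves implicit, but they do not change the argument.
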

\begin{proof}
For any profile $P'$ on any violation tree in $\calT_{P,B}$, $\hist(P')$ is in the $7\sqrt n$ neighborhood of $\frac{n}{m!}\cdot\vec 1$ in $L_\infty$. For any operation $\text{flip}(B\times R)$, let $n_R$ denote the number of $R$ votes in $P'$. Then, the number of combinations of $B$ votes, each of which being $R$, is ${n_R\choose B}$, which is at least  $\frac{( \frac{n}{m!}-8\sqrt n )^B}{B! }$. 
Because the total number of operations in a violation template is $\frac{20\sqrt n}{B}$, for any sufficiently large $n$, the total number of violation trees rooted at $P'$ is at least
\begin{align*}
&\left(\frac{ (\frac{n}{m!}-8\sqrt n)^B }{B! }\right)^{\frac{20\sqrt n}{B}} \\
= &\left(\frac{ n}{m!\sqrt[B]{B!} }\right)^{ 20\sqrt n }
\cdot \left(1-\frac{8m!}{\sqrt n}\right)^{ 20\sqrt n } \\
=&\Omega\left(\left(\frac{n}{m!  \sqrt[B]{B!}}\right)^{20\sqrt n}\right) 
\end{align*}
\end{proof}


The next claim states that each violation tree contains a violation of $\CC$ or $\Par_B$.
\begin{claimnew}
\label{claim:tree-violation}
For every $P\in \calP_{n}$ and every  $T\in \calT_{P,B}$,  $\CC$ or $\Par_B$ is violated in $T$.
\end{claimnew}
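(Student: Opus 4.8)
The plan is to fix a violation tree $T\in\calT_{P,B}$, follow the unique root-to-leaf path selected by the winners that $r$ actually outputs, and show that this path must contain either a $\Par_B$ violation on one of its edges or a $\CC$ violation at its leaf. Recall that the diagram of \citep[Chapter 1]{Peters2019:Fair} conditions each branch on the source winner lying in a prescribed set $S\subseteq\ma$, and that every edge is built from a single ranking: a $+R$ edge flips copies of $\rev R$ into $R$, and a $-R$ edge flips copies of $R$ into $\rev R$. The structural feature I would extract from that diagram is that on each edge the ``flipped-in'' ranking ranks every alternative of the maintained set $S$ above every alternative of $\bar S=\ma\setminus S$, while the ``flipped-out'' ranking ranks $\bar S$ above $S$; on the sample left branch this is visible from $[1234]$ topping $\{1,2\}$ and $[2431]$ bottoming $\{1\}$. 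Following the branch consistent with $r(P)$ guarantees that at the start of each edge the winner lies in the corresponding $S$.

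The crux is to show that flips, which never change $n$, nonetheless witness a no-show paradox. Suppose some edge carries the winner out of $S$, and consider the first flip where this happens: two consecutive profiles $P_{\text{before}}$ (winner $w\in S$) and $P_{\text{after}}$ (winner $w'\in\bar S$) that differ in exactly the $B$ votes of that operation. Let $Q'$ be the common sub-profile obtained by deleting those $B$ voters from either profile; since $|Q'|=n-B$ it is a legitimate abstention target, and $r(Q')$ lies in $S$ or in $\bar S$. If $r(Q')\in\bar S$, then in $P_{\text{before}}$ the $B$ deleted voters all hold the flipped-out ranking, which ranks $\bar S$ above $S$, so they strictly prefer $r(Q')$ to $w$ and jointly profit by abstaining, violating $\Par_B$. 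If instead $r(Q')\in S$, then in $P_{\text{after}}$ the $B$ deleted voters all hold the flipped-in ranking, which ranks $S$ above $\bar S$, so they strictly prefer $r(Q')$ to $w'$ and again profit by abstaining. Either way $\Par_B$ is violated by a group of size exactly $B$; the required $B$ identical votes exist because every profile in $T$ keeps $\hist$ within the $7\sqrt n$-neighborhood of $\frac{n}{m!}\cdot\vec 1$, so each ranking occurs $\ge\frac{n}{m!}-7\sqrt n\gg B$ times.

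If no edge ever carries the winner out of its maintained set, then the winner survives inside $S$ all the way to the leaf, whereas the original diagram places the leaf's Condorcet winner outside that set; hence the chosen winner differs from the Condorcet winner and $\CC$ is violated at the leaf. This requires checking that the Condorcet winner at the leaf is unambiguous despite the $\pm\sqrt n$ slack in Definition~\ref{dfn:Pn}: each flip changes every weight of the WMG by a fixed amount independent of which voters are chosen, so the leaf WMG equals the scaled leaf weights shifted by at most the $\pm\sqrt n$ slack inherited from the root. Because flipping a vote moves each weight by twice as much as adding a single vote, the scaled leaf weights dominate this slack and every leaf edge keeps its sign, so the highlighted Condorcet winner is correct and $\CC$ indeed fails.

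I expect the main obstacle to be the participation argument of the second paragraph: realizing that a flip sequence, despite keeping $n$ fixed, certifies a no-show paradox, and that one must case on $r(Q')$ and exploit the alignment between the flipped-in/flipped-out rankings and the maintained set so that exactly one of $P_{\text{before}},P_{\text{after}}$ always admits a profitable joint abstention. The WMG bookkeeping at the leaves is routine but must be handled carefully, since leaf edges of unit (scaled) weight would be destroyed by a careless tolerance; the factor-two gain from flipping versus adding is precisely what keeps their signs intact.
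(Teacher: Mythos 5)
Your core argument is the same as the paper's. Given the first flip at which the winner escapes the maintained set, you delete the $B$ flipped voters from both adjacent profiles to obtain a common $(n-B)$-profile $Q'$, then case on where $r(Q')$ lies so that exactly one of $P_{\text{before}}$, $P_{\text{after}}$ admits a profitable joint abstention. The paper's proof does precisely this, with a cosmetically different dichotomy: it cases on whether $r(P_1^B)\succ_R r(P_1)$ (where $R$ is the flipped-out ranking), closing the other case by transitivity of $\succ_R$, instead of your split on $r(Q')\in S$ versus $r(Q')\in\bar S$; both rest on the same alignment property of the edges of the Peters diagram, and both locate the $\Par_B$ violation at one of the two consecutive profiles of the tree.

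There is, however, a genuine error in your leaf-case bookkeeping. You assert that ``the factor-two gain from flipping versus adding is precisely what keeps the signs intact,'' but the factor of two applies only to the drift accumulated along a branch, not to the root weights, and it cuts the wrong way on pairs whose weight \emph{decreases}. Writing $w(e)$ for a root weight in Figure~\ref{fig:proof} and $\Delta(e)$ for the signed change along the original (unscaled) branch, a scaled leaf weight lies in $\sqrt n\,(w(e)+2\Delta(e))\pm\sqrt n$ by Definition~\ref{dfn:Pn}, so its positivity requires $w(e)+2\Delta(e)\ge 2$, which does \emph{not} follow from the original diagram's guarantee $w(e)+\Delta(e)\ge 1$. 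Concretely, on the leftmost branch the leaf Condorcet winner $3$ has $\Delta(3,1)=-5$ (adding $2\times[1234]$ costs $2$, removing $3\times[2431]$ costs $3$), so one needs the root weight on $3\ra 1$ to be at least $12$; the Peters diagram meets this exactly (consistent with the profile constructed in the proof of Claim~\ref{claim:Pn}), with no room to spare, whereas a hypothetical diagram with root weight $10$ on that edge would satisfy your premise ($10-5\ge 1$) while refuting your conclusion ($\sqrt n(10-10)-\sqrt n<0$). So sign preservation at the leaves is a property of the particular numbers in the diagram, not a generic consequence of flips counting double; the paper avoids this issue by verifying the leaf Condorcet winners once and hard-wiring them into the violation template (they are the highlighted alternatives in Figure~\ref{fig:violation-template}), so that its proof of the claim only reads them off rather than re-deriving them.
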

\begin{proof}
Following the proof diagram by~\citep[Chapter 1]{Peters2019:Fair} (also see Section~\ref{sec:prelim} and Figure~\ref{fig:proof}), there exists a pair of profiles $P_1, P_2$ in $T$ such that $P_2$ is obtained from $P_1$ by an operation, which flips $B$ votes of $R$, and either $P_2$ violates $\CC$ or $r(P_2)\succ_R r(P_1)$. If  $P_2$ violates $\CC$ then the claim automatically holds. Otherwise, we have $r(P_2)\succ_R r(P_1)$. Let $\{n_1,n_2,\ldots,n_B\}\subseteq [n]$ denote the indices to the flipped votes in $P_1$ in order to reach $P_2$.  For every $1\le t\le B$, let $P_1^t$ (respectively, $P_2^t$) denote the $(n-t)$-profile that is obtained from $P_1$ (respectively, $P_2$) by removing voters  $\{n_1,\ldots,n_t\}$. It follows that $P_1^B = P_2^B$. Then, there are two cases.
\begin{itemize}
\item If $r(P_1^B)\succ_R r(P_1)$, then  $\Par_B$ is violated at $P_1$, because the $B$ voters $\{n_1,\ldots,n_B\}$ can improve the winner by abstaining from voting. 
\item Otherwise, because  $r(P_2)\succ_R r(P_1)$, we have $r(P_2)\succ_R r(P_1^B) =r(P_2^B)$, which means that $r(P_2^B)\succ_{\rev{R}} r(P_2)$. Therefore, $\Par_B$ is violated at $P_2$, because the $B$ voters $\{n_1,\ldots,n_B\}$, whose votes are $\rev{R}$, can improve the winner by abstaining from voting.
\end{itemize}
In both cases $\CC$ and/or $\Par_B$ is violated at a profile in $T$, which proves Claim~\ref{claim:tree-violation}.
\end{proof}

Let $\calV_{n,B}$ denote the set of all profiles on violation trees $\bigcup_{P\in\calP_{n}} \calT_{P,B}$, where $\CC$  or $\Par_B$ is violated. The next claim upper-bounds the number of violation trees that each profile in $\calV_{n,B}$ can possibly be on.

\begin{claimnew}
\label{claim:num-rotated-trees}
Every $P^*\in \calV_{n,B}$ is on no more than $O\left(\frac{\sqrt n}{B}\cdot\left(\frac{n}{m!  \sqrt[B]{B!}}\right)^{20\sqrt n}\right)$  violation trees rooted in $\calP_n$.
\end{claimnew}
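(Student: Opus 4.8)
The plan is to bound, for a fixed $P^*\in\calV_{n,B}$, the number of pairs $(P,T)$ with $P\in\calP_n$, $T\in\calT_{P,B}$, and $P^*$ a node of $T$, by a \emph{rotation} argument that re-roots the violation template at $P^*$. First I would observe that every such $T$ has exactly $\frac{20\sqrt n}{B}+1$ nodes, so $P^*$ can occupy only one of $O\!\left(\frac{\sqrt n}{B}\right)$ node positions $v$ in the template. It therefore suffices to fix a position $v$ and bound the number of trees $T$ in which $P^*$ sits at $v$; summing over the $O\!\left(\frac{\sqrt n}{B}\right)$ positions contributes the leading factor $\frac{\sqrt n}{B}$.

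For a fixed position $v$, I would argue that a violation tree with $P^*$ at $v$ is completely determined by $P^*$ together with the choice, on each of the $\frac{20\sqrt n}{B}$ edges of the template, of the set of $B$ voters flipped on that edge. Starting from $P^*$ one reconstructs the whole tree by propagating: traversing backward along the path from $v$ to the root (undoing each operation) and then forward down every branch. Each edge, whether traversed forward or backward, is a single operation $\flip{B\times R}$ for a specific ranking $R$ (an edge traversed against its orientation simply becomes a flip of $B$ copies of $\rev R$), and applying it to a known profile with a specified set of $B$ votes yields a unique neighboring profile. Hence the tree is a function of $P^*$ and the per-edge vote choices, and it remains to count those choices.

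The key quantitative input is that every profile $P'$ appearing on any tree in $\bigcup_{P\in\calP_n}\calT_{P,B}$ has $\hist(P')$ within the $7\sqrt n$ neighborhood of $\frac{n}{m!}\cdot\vec 1$ in $L_\infty$ (as already used in Claim~\ref{claim:vio-tree-num}). Consequently, at every edge the number of available votes $n_R$ of the relevant ranking satisfies $n_R\le \frac{n}{m!}+7\sqrt n$, so the number of ways to select the $B$ flipped votes is at most $\binom{n/m!+7\sqrt n}{B}\le \frac{(n/m!+7\sqrt n)^B}{B!}$; this uniform bound applies to backward edges as well, since undoing a flip of $B$ copies of $R$ amounts to choosing $B$ of the at most $n/m!+7\sqrt n$ copies of $\rev R$ present in the current profile. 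Taking the product over all $\frac{20\sqrt n}{B}$ edges gives at most
\begin{align*}
\left(\frac{(n/m!+7\sqrt n)^B}{B!}\right)^{\frac{20\sqrt n}{B}}
&=\left(\frac{n}{m!\sqrt[B]{B!}}\right)^{20\sqrt n}\left(1+\frac{7m!}{\sqrt n}\right)^{20\sqrt n}\\
&=O\left(\left(\frac{n}{m!\sqrt[B]{B!}}\right)^{20\sqrt n}\right),
\end{align*}
where $(1+7m!/\sqrt n)^{20\sqrt n}\to e^{140\,m!}$ is absorbed into the constant (which depends on $m$ but not on $n$, $B$, or $r$). Multiplying by the $O\!\left(\frac{\sqrt n}{B}\right)$ positions yields the claimed bound.

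The main obstacle I anticipate is making the re-rooting (the ``rotated template'' and the associated ``rotated trees'') rigorous across the branching structure of Figure~\ref{fig:violation-template}: one must check that the map sending a tree with $P^*$ at the fixed position $v$ to its tuple of per-edge flipped-vote sets is injective, and that each backward edge genuinely admits at most $\binom{n/m!+7\sqrt n}{B}$ candidate predecessors, so that the product over edges is a valid over-count (hence upper bound) rather than missing trees. The uniform histogram-neighborhood bound is what keeps the per-edge factor at $\frac{n}{m!}+O(\sqrt n)$ at every node simultaneously, and verifying that it holds along the reconstructed (re-rooted) paths is the crux of the argument.
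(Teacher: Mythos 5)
Your proposal is correct and follows essentially the same route as the paper's own proof: the paper likewise re-roots the template at $P^*$ (its ``rotated templates''), notes there are $\frac{20\sqrt n}{B}+1$ possible positions for $P^*$, and bounds the per-edge flip choices via the $O(\sqrt n)$ histogram neighborhood of $\frac{n}{m!}\cdot\vec 1$, giving the same product bound $O\bigl(\bigl(\frac{n}{m!\sqrt[B]{B!}}\bigr)^{20\sqrt n}\bigr)$ per position. The only differences are immaterial constants in the neighborhood radius (the paper uses $18\sqrt n$ on rotated trees, you use $7\sqrt n$), which are absorbed into the $O(\cdot)$ either way.
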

\begin{proof}
For every node $V$ in the violation template, we define a {\em rotated template}, which reverses all edges along the path from the root to $V$. That is, an edge $V_1\ra V_2$ along the path that flips $B\times R$ becomes $V_1\leftarrow V_2$  in the rotated template that flips $B\times \rev{R}$. Consequently, the rotated template is a diagram rooted at $V$. Because each violation template has $\frac{20\sqrt n}{B}+1$ nodes, there are $\frac{20\sqrt n}{B}+1$ rotated templates. Figure~\ref{fig:rotated} illustrates a rotated template rooted at a node in the leftmost branch. 

\begin{figure}[htp]
\centering  
\includegraphics[width = .9\textwidth]{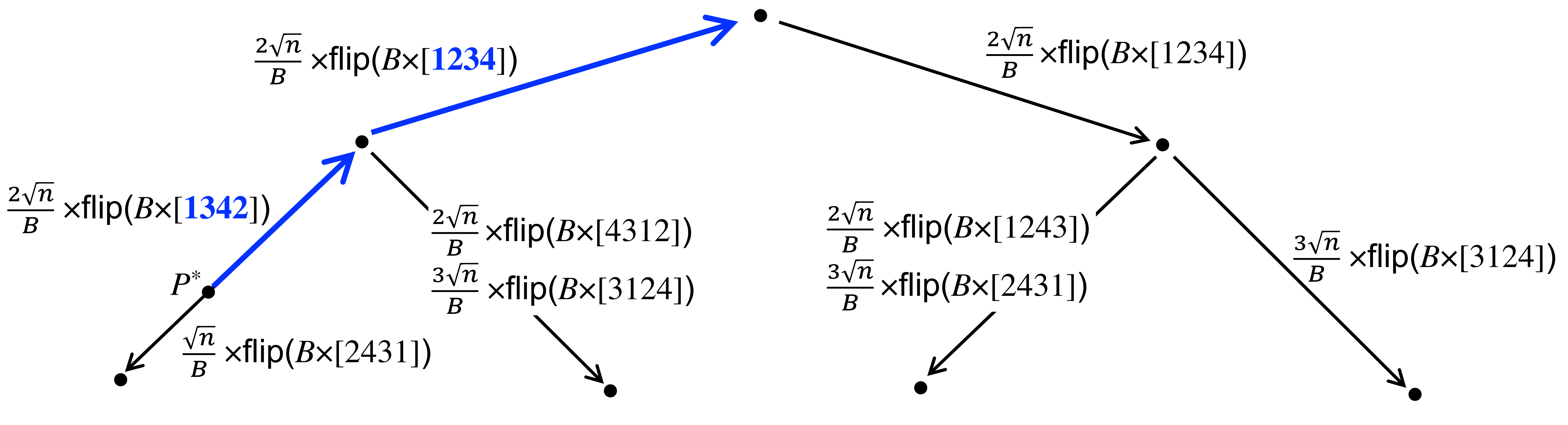}
\caption{A rotated template rooted at $P^*$. Reversed edges and rankings are highlighted.\label{fig:rotated}}
\end{figure}

Like the violation templates,  any rotated template leads to {rotated trees} rooted at any profile $P^*$ that contains sufficiently many copies of $\ml(\ma)$. Given a rotated template and  $P^*\in \calV_{n,B}$, because violation trees are rooted at profiles in $P_n$, and each node in a violation tree is obtained from the root by modifying no more than $7\sqrt n$ votes, we have  $|\hist(P^*) - \frac{n}{m!}\cdot\vec 1| \le 11 \sqrt n$. This means that the histogram of any profile on the rotated tree is no more than $18\sqrt n$ away from $\frac{n}{m!} \cdot\vec 1$. Therefore, like the proof of Claim~\ref{claim:vio-tree-num}, the number of rotated trees rooted at $P^*$ is at most
\begin{align*}
&\left(\frac{ (\frac{n}{m!}+18\sqrt n)^B }{B! }\right)^{\frac{20\sqrt n}{B}} \\
=& \left(\frac{ n}{m!\sqrt[B]{B!} }\right)^{ 20\sqrt n }
\cdot \left(1+ \frac{18m!}{\sqrt n}\right)^{ 20\sqrt n }\\
 =&O\left(\left(\frac{n}{m!  \sqrt[B]{B!}}\right)^{20\sqrt n}\right) 
\end{align*}
Because there are $\frac{20\sqrt n}{B}+1$ rotated templates and each violation tree that contains $P^*$ is equivalent to a rotated tree rooted at $P^*$, the number of violation trees that contains $P^*$ is  no more than 
\begin{align*}
&\left(\frac{20\sqrt n}{B}+1\right)\cdot O\left(\left(\frac{n}{m!  \sqrt[B]{B!}}\right)^{20\sqrt n}\right)  \\
=& O\left(\frac{\sqrt n}{B}\cdot \left(\frac{n}{m!  \sqrt[B]{B!}}\right)^{20\sqrt n}\right)
\end{align*}
This completes the proof of Claim~\ref{claim:num-rotated-trees}.
\end{proof}

We are now ready to lower-bound $|\calV_{n,B}|$. To this end, we count the number of (profile, violation tree) pairs, denoted by $(P,T)$, where $T$ is rooted at a profile in $\calP_n$, $P$ is on $T$, and $\CC$ and/or $\Par_B$ is violated at $P$. By Claim~\ref{claim:vio-tree-num} and Claim~\ref{claim:tree-violation}, the total number of such (profile, violation tree) pairs is at least $|\calP_n|\times \Omega\left(\left(\frac{n}{m!  \sqrt[B]{B!}}\right)^{20\sqrt n}\right)$. By Claim~\ref{claim:num-rotated-trees}, the total number of such (profile, violation tree) pairs is at most $|\calV_{n,B}|\times O\left(\frac{\sqrt n}{B}\cdot \left(\frac{n}{m!  \sqrt[B]{B!}}\right)^{20\sqrt n}\right)$. Therefore,
\begin{equation}
\label{eq:vpratio}
\frac{|\calV_{n,B}|}{|\calP_n|}\ge \frac{\Omega\left(\left(\frac{n}{m!  \sqrt[B]{B!}}\right)^{20\sqrt n}\right)}{O\left(\frac{\sqrt n}{B}\cdot \left(\frac{n}{m!  \sqrt[B]{B!}}\right)^{20\sqrt n}\right)}=\Omega\left(\frac{B}{\sqrt n}\right)
\end{equation}

\myparagraph{Step 2.} 
Let $\vec\pi\in\Pi^n$ be such that $\sum_{j=1}^n \pi_j$ is $O(1)$ away from $\frac{n}{m!}\cdot\vec 1$, which can be defined by rounding as shown in~\citep{Xia2021:How-Likely}. Let $\calS_n$ denote the set of all permutations over $[n]$. For any permutation $\eta\in\calS_n$, let $\eta(\vec\pi)$ denote the vector of distributions where the indices are permuted according to $\eta$. That is, $\eta(\vec\pi) = (\pi_{\eta(1)},\ldots,\pi_{\eta(n)})$. We prove that there exists a permutation $\eta$ over $[n]$, such that 
\begin{equation}
\label{eq:vb}
\Pr\nolimits_{P\sim \eta(\vec\pi)}(P\in \calV_{n,B}) = \Omega\left(\frac{B}{\sqrt n}\right)
\end{equation}
It suffices to prove that the sum of  the left hand side of \eqref{eq:vb} for all $\eta\in\calS_n$ is at least $n!$ times the right hand side of \eqref{eq:vb}, that is,
\begin{equation}
\label{eq:Vnb}
\sum\nolimits_{\eta\in \calS_n}\Pr\nolimits_{P\sim \eta(\vec\pi)}(P\in \calV_{n,B}) = \Omega\left(n!\cdot\frac{B}{\sqrt n}\right)
\end{equation}
Nevertheless, \eqref{eq:Vnb} is still hard to prove due to the lack of information about the profiles in $\calV_{n,B}$. The key insight of our proof is to convert the left hand side of \eqref{eq:Vnb} to probabilities for the histogram of $P$ to be the histograms of profiles in $\calV_{n,B}$. 
Notice that the left hand side of  \eqref{eq:Vnb} is equivalent to 
\begin{align}
&\sum\nolimits_{\eta\in \calS_n}\sum\nolimits_{P^*\in  \calV_{n,B}}\Pr\nolimits_{P\sim \eta(\vec\pi)}(P = P^*)\notag\\
=&\sum\nolimits_{P^*\in  \calV_{n,B}}\sum\nolimits_{\eta\in \calS_n}\Pr\nolimits_{P\sim  \vec\pi }(P = \eta^{-1}(P^*))\notag\\
=&\sum\nolimits_{P^*\in  \calV_{n,B}} \sum\nolimits_{\eta\in \calS_n}\Pr\nolimits_{P\sim \vec\pi}(P =\eta^{-1}(P^*))\notag\\
=& \sum\nolimits_{P^*\in  \calV_{n,B}}\prod\nolimits_{R\in\ml(\ma)}(n^*_R!)\times \notag\\
&\hspace{17mm}\Pr\nolimits_{P\sim \vec\pi}(\hist(P) =\hist(P^*)),\label{eq:hist-P}
\end{align}
where $n^*_R$ is the number of $R$ votes in $P^*$. \eqref{eq:hist-P} follows the following claim. 
\begin{claimnew}
\label{claim:hist-conversion} For any $P^*\in  \calV_{n,B}$, 
\begin{align*}&\sum\nolimits_{\eta\in \calS_n}\Pr\nolimits_{P\sim \vec\pi}(P =\eta^{-1}(P^*))=   \prod\nolimits_{R\in\ml(\ma)}(n^*_R!)\times\Pr\nolimits_{P\sim \vec\pi}(\hist(P) =\hist(P^*))
\end{align*}
\end{claimnew}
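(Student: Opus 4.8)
The plan is to expand both sides using the independence of the $n$ votes and then recognize the sum over $\calS_n$ as an orbit count under the natural action of the symmetric group on profiles. Since voter $j$ draws independently from $\pi_j$, for any fixed $n$-profile $Q=(Q_1,\ldots,Q_n)$ we have $\Pr\nolimits_{P\sim\vec\pi}(P=Q)=\prod_{j=1}^n \pi_j(Q_j)$. Writing $P^*=(R^*_1,\ldots,R^*_n)$, and using the paper's convention that $\eta$ permutes voter indices (so $\eta^{-1}(P^*)$ has $j$-th coordinate $R^*_{\eta^{-1}(j)}$), the probability $\Pr\nolimits_{P\sim\vec\pi}(P=\eta^{-1}(P^*))$ factors as $\prod_{j=1}^n \pi_j(R^*_{\eta^{-1}(j)})$.

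The first step is to reindex the sum. As $\eta$ ranges over $\calS_n$, so does $\eta^{-1}$, hence the left-hand side of the claim equals $\sum_{\eta\in\calS_n}\Pr\nolimits_{P\sim\vec\pi}(P=\eta(P^*))$, where now $\eta(P^*)=(R^*_{\eta(1)},\ldots,R^*_{\eta(n)})$. I perform this reindexing first precisely so that I may work with the forward action of $\calS_n$ on profiles and avoid confusion between $\eta$ and $\eta^{-1}$.

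The key step is an orbit–stabilizer argument. The map $\eta\mapsto \eta(P^*)$ sends $\calS_n$ onto exactly the set of profiles $Q$ with $\hist(Q)=\hist(P^*)$: permuting coordinates cannot change the multiset of votes, and conversely every reordering of that multiset is reachable. The stabilizer $\{\eta:\eta(P^*)=P^*\}$ consists precisely of the permutations that shuffle voters casting identical votes, so it has size $\prod_{R\in\ml(\ma)}(n^*_R!)$; by homogeneity of the action, the preimage of every $Q$ in the image has this same cardinality. Grouping the sum over $\eta$ by the value $Q=\eta(P^*)$ therefore yields
$$\sum\nolimits_{\eta\in\calS_n}\Pr\nolimits_{P\sim\vec\pi}(P=\eta(P^*))=\prod\nolimits_{R\in\ml(\ma)}(n^*_R!)\sum\nolimits_{Q:\hist(Q)=\hist(P^*)}\Pr\nolimits_{P\sim\vec\pi}(P=Q).$$
The inner sum equals $\Pr\nolimits_{P\sim\vec\pi}(\hist(P)=\hist(P^*))$, because the events $\{P=Q\}$ over profiles $Q$ sharing the fixed histogram are disjoint and their union is exactly $\{\hist(P)=\hist(P^*)\}$. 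Combining this with the reindexing step establishes the claim.

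I do not anticipate a genuine obstacle: the statement is a bookkeeping identity, and each ingredient—independence, the forward action of $\calS_n$, and the stabilizer size $\prod_{R}(n^*_R!)$—is elementary. The only point demanding care is consistency of the permutation convention ($\eta$ versus $\eta^{-1}$ acting on coordinates), which the initial reindexing resolves cleanly; after that the orbit counting is routine.
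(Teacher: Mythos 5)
Your proof is correct and follows essentially the same route as the paper's: the paper's observation that the multiset $\{\eta^{-1}(P^*):\eta\in\calS_n\}$ consists of $\prod_{R\in\ml(\ma)}(n^*_R!)$ copies of the set of distinct permuted profiles is exactly the orbit--stabilizer fact you prove, and both arguments conclude by identifying that set with the profiles sharing $\hist(P^*)$ and summing the disjoint events. Your version merely makes explicit the stabilizer computation, the $\eta\mapsto\eta^{-1}$ reindexing, and the independence factorization, which the paper leaves implicit.
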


\begin{proof}
Let $M$ (respectively, $\hat M$) denote the set (respectively, multi-set) of $n$-profiles $\{\eta^{-1}(P^*):\eta\in \calS_n\}$. It follows that $\hat M$ consists of $\prod\nolimits_{R\in\ml(\ma)}(n^*_R!)$ copies of $M$. Therefore, 
\begin{align*}
& \sum\nolimits_{\eta\in \calS_n}\Pr\nolimits_{P\sim \vec\pi}(P =\eta^{-1}(P^*))\notag\\
=& \sum \nolimits_{P'\in \hat M}\Pr\nolimits_{P\sim \vec\pi}(P = P')\\
=& \prod\nolimits_{R\in\ml(\ma)}(n^*_R!) \sum \nolimits_{P'\in M}\Pr\nolimits_{P\sim \vec\pi}(P = P')\\
=& \prod\nolimits_{R\in\ml(\ma)}(n^*_R!)\Pr\nolimits_{P\sim \vec\pi}(\hist(P) =\hist(P^*)),
\end{align*}
which proves the claim.
\end{proof}


Because $\hist(P^*)$ is $7\sqrt n$ away  from $\frac{n}{m!}\cdot\vec 1$, we have $\prod\nolimits_{R\in\ml(\ma)}(n^*_R!) = \Theta\left(\left(\frac{n}{m!}\right)!\right)$. By the point-wise concentration lemma~\cite[Lemma~1]{Xia2021:How-Likely}, the likelihood for the histogram of $P$ to be any $\vec x$ in an $O(\sqrt n)$ neighborhood of $\sum_{j=1}^n\pi_j$ is $\Omega(n^{(1-m!)/2})$. Recall that $\sum_{j=1}^n\pi_j$ is $O(1)$ away from $\frac{n}{m!}\cdot\vec 1$. Therefore, for every $P^*\in\calV_{n,B}$, 
\begin{equation}
\label{eq:hist-lower}
\Pr\nolimits_{P\sim \vec\pi}(\hist(P) =\hist(P^*)) = \Omega\left(n^{(1-m!)/2}\right)
\end{equation}
Combining  \eqref{eq:vpratio}, \eqref{eq:hist-P}, and \eqref{eq:hist-lower},   the left hand side of \eqref{eq:Vnb} becomes
\begin{align}
& \Omega\left(|\calV_{n,B}|\cdot \left(\frac{n}{m!}\right)!\cdot n^{(1-m!)/2}\right)\notag\\
=& \Omega\left(\frac{B}{\sqrt n }\cdot \left(\frac{n}{m!}\right)!\cdot |\calP_{n}|\cdot n^{(1-m!)/2}\right) \label{eq:Vnb-size}
\end{align}
Next, we lower-bound $|\calP_{n}|$ in the following claim. 
\begin{claimnew}
\label{claim:Pn}
$|\calP_{n}| = \Omega\left(\frac{n!}{\left(\frac{n}{m!}\right)!}\cdot n^{(m!-1)/2}\right)$.
\end{claimnew}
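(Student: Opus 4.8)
The plan is to express $|\calP_n|$ as a sum of multinomial coefficients over feasible histograms and to lower-bound it by the contribution of a single full-dimensional sub-box of histograms. Writing $H_n$ for the set of integer histograms $\vec h$ satisfying both constraints in Definition~\ref{dfn:Pn}, and noting that both constraints depend only on the histogram, we have $|\calP_n| = \sum_{\vec h\in H_n}\binom{n}{\vec h}$, since $\binom{n}{\vec h}=\frac{n!}{\prod_{R}h_R!}$ counts the profiles with histogram $\vec h$. Thus two ingredients suffice: a lower bound of $\Omega(n^{(m!-1)/2})$ on $|H_n|$, and a lower bound of $\Omega\big((m!)^n\,n^{(1-m!)/2}\big)$ on each individual coefficient $\binom{n}{\vec h}$ with $\vec h\in H_n$.

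The per-histogram estimate is immediate from the tools already in use: every $\vec h\in H_n$ lies within $O(\sqrt n)$ of $\frac{n}{m!}\cdot\vec 1$ in $L_\infty$, so the point-wise concentration lemma \cite[Lemma~1]{Xia2021:How-Likely} applied to IC gives $\frac{1}{(m!)^n}\binom{n}{\vec h}=\Pr_{\text{IC}}(\hist(P)=\vec h)=\Omega(n^{(1-m!)/2})$, i.e.\ $\binom{n}{\vec h}=\Omega\big((m!)^n\,n^{(1-m!)/2}\big)$. Equivalently, Stirling's formula shows $\frac{n!}{\prod_R h_R!}=\Theta\big(n!/((n/m!)!)^{m!}\big)$ throughout this neighborhood, the $O(\sqrt n)$ deviations of each $h_R$ from $\frac{n}{m!}$ changing the value by at most a constant factor.

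The crux is the count $|H_n|=\Omega(n^{(m!-1)/2})$, i.e.\ showing that the $\binom{m}{2}$ weighted-majority-graph slabs do not collapse the dimension of the feasible region. I would first fix a central histogram $\vec h_0$: since each $w(e)$ is an $O(1)$ constant and $w_P(e)=\sum_R\sigma_R^e h_R$ with $\sigma_R^e\in\{+1,-1\}$ is a linear functional of the histogram, a McGarvey-type construction yields a fixed direction $\vec\delta\in\mathbb R^{m!}$ with $\sum_R\delta_R=0$ and $|\vec\delta|_\infty=O(1)$ such that $\vec h_0 \triangleq \frac{n}{m!}\cdot\vec 1 + \lfloor\sqrt n\rfloor\,\vec\delta$ satisfies $w_{\vec h_0}(e)=\lfloor\sqrt n\rfloor\,w(e)$ for every edge $e$. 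I would then consider the sub-box of integer histograms $\vec h=\vec h_0+\vec u$ with $\sum_R u_R=0$ and $|\vec u|_\infty\le c\sqrt n$ for a small constant $c$. For such $\vec h$, each WMG weight changes by $|w_{\vec u}(e)|\le (m!)\,|\vec u|_\infty\le (m!)c\sqrt n\le\sqrt n$ once $c\le 1/m!$, so the WMG constraint $|w_{\vec h}(e)-\sqrt n\,w(e)|\le\sqrt n$ holds; and $|\vec h-\frac{n}{m!}\vec 1|_\infty\le(|\vec\delta|_\infty+c)\sqrt n\le 4\sqrt n$ for the fixed, small-magnitude weights of the root WMG, so the box constraint holds as well. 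Hence the entire sub-box lies in $H_n$, and being the intersection of the rational hyperplane $\sum_R u_R=0$ with a cube of side $\Theta(\sqrt n)$ in $\mathbb R^{m!}$, it contains $\Theta\big((\sqrt n)^{m!-1}\big)=\Theta(n^{(m!-1)/2})$ lattice points.

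Combining the two estimates, $|\calP_n|\ge\sum_{\vec h\in H_n}\binom{n}{\vec h}=\Omega(n^{(m!-1)/2})\cdot\Omega\big((m!)^n n^{(1-m!)/2}\big)=\Omega\big((m!)^n\big)$, which by Stirling's formula is exactly the order asserted in Claim~\ref{claim:Pn}. The main obstacle is precisely the third step: verifying that a bounded-entry shift $\vec\delta$ realizing the target WMG weights exists and still leaves room for an $\Omega(\sqrt n)$-wide perturbation in all $m!-1$ free directions while respecting both families of constraints simultaneously. The divisibility assumption $m!\mid n$ keeps $\frac{n}{m!}\cdot\vec 1$ integral throughout, and the general case follows from the same rounding argument used earlier in the proof.
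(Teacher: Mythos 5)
Your proposal is correct and follows the paper's own skeleton: the paper likewise lower-bounds $|\calP_n|$ by (number of feasible histograms in a $\Theta(\sqrt n)$-box around a center realizing the root WMG scaled by $\sqrt n$) times (a uniform lower bound on the multinomial coefficient of each such histogram), yielding $\Omega\bigl(n^{(m!-1)/2}\bigr)\cdot\Omega\bigl((m!)^n n^{(1-m!)/2}\bigr)=\Omega\bigl((m!)^n\bigr)$. Two execution differences are worth recording. First, the bounded zero-sum shift $\vec\delta$ whose existence you flag as the main obstacle is produced in the paper by an explicit construction rather than a McGarvey argument: the center is obtained from the uniform histogram by flipping $2\sqrt n\times[1234]$, $3\sqrt n\times[2431]$, $3\sqrt n\times[3124]$, and $2\sqrt n\times[4312]$; each flip moves mass from $R$ to $\rev{R}$ (hence is zero-sum), realizes the scaled root WMG within the allowed slack, and perturbs each histogram entry by at most $3\sqrt n$---which is exactly why Definition~\ref{dfn:Pn} grants slack $4\sqrt n$. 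So the obstacle is real but easily discharged, and your requirement $|\vec\delta|_\infty<4$ is met by this explicit $\vec\delta$. Second, for the per-histogram count the paper uses the cruder estimate that bounds each $h_R!$ by $\bigl(\frac{n}{m!}+4\sqrt n\bigr)!$ and then absorbs the discrepancy into an $\Omega(\cdot)$; as literally written this discards the constraint $\sum_R\bigl(h_R-\frac{n}{m!}\bigr)=0$ and loses a factor $n^{\Theta(\sqrt n)}$, whereas your route through the point-wise concentration lemma (equivalently, Stirling exploiting the zero-sum deviations) is airtight, so your write-up of this step is actually cleaner than the paper's. Finally, your reading of the target bound as $\Theta\bigl((m!)^n\bigr)$ is the right one: the claim's $\bigl(\frac{n}{m!}\bigr)!$ must be interpreted as $\prod_R(n^*_R!)=\Theta\Bigl(\bigl(\bigl(\frac{n}{m!}\bigr)!\bigr)^{m!}\Bigr)$, consistently with how the claim is combined with \eqref{eq:hist-P} and \eqref{eq:hist-lower} in Step 2; under the literal reading the claim would assert more profiles than the $(m!)^n$ that exist in total.
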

\begin{proof}
 Let $P$ denote the $n$-profile obtained from any profile whose histogram is $\frac{n}{m!}\cdot\vec 1$ by flipping $2\sqrt n\times [1234]$, $3\sqrt n\times [2431]$, $3\sqrt n\times [3124]$, and $2\sqrt n\times [4312]$. Then, we have  $P\in \calP_n$. Moreover, let $\calN$ denote the set of $n$-profiles whose histograms are no more than $\frac{\sqrt n}{m!}$ away from $\hist(P)$ in $L_\infty$. It follows that $\calN\subseteq \calP_n$ and the histogram of any profile in $\calP_n$ is no more than $\frac{4\sqrt n}{m!}$ away from $\frac{n}{m!}\cdot\vec 1$. Therefore,
\begin{align*}
|\calP_{n}| \ge |\calN|\ge & \left(\frac{ \sqrt n}{m!}\right)^{m!-1}\times  \frac{n!}{\left(\frac{n}{m!}+4\sqrt n\right)^{m!}} \\
= &\Omega\left(\left(\sqrt n\right)^{(m!-1)/2}\right)\times \Omega\left(\frac{n!}{\left(\frac{n}{m!}\right)^{m!}}\right),
\end{align*}
where $\left(\frac{ \sqrt n}{m!}\right)^{m!-1}$ is a lower bound on the number of histograms in the $\frac{\sqrt n}{m!}$ neighborhood of $\hist(P)$, and for every such histogram $\vec h$,  $\frac{n!}{\left(\frac{n}{m!}+4\sqrt n\right)^{m!}}$ is a lower bound on the number of $n$-profiles whose WMG is $\vec h$. This completes the proof of Claim~\ref{claim:Pn}.
\end{proof}

Finally, \eqref{eq:Vnb} follows after \eqref{eq:Vnb-size} and Claim~\ref{claim:Pn}, which completes the proof for the special case of Theorem~\ref{thm:CC+Par}. 

\myparagraph{The general case.} We make the following three changes.

\myparagraph{First,} in the definition of $\calP_n$ (Definition~\ref{dfn:Pn}), the weight on any edge from $\{1,2,3,4\}$  to $\{5,\ldots, m\}$ is required to be at least $20\sqrt n$. The new $\calP_n$ is illustrated in the root of Figure~\ref{fig:VD-general}.

\myparagraph{Second,} we  change  the violation template (Definition~\ref{dfn:vio-diagram}) as illustrated in Figure~\ref{fig:VD-general}.
\begin{itemize}
\item We add an extra branch from the root that performs $\frac{7\sqrt n}{B}\times \text{flip}(B\times [1\succ 2\succ 3\succ 4\succ 5\succ\cdots\succ m])$. $4$ is the Condorcet winner at the leaf node. We also append $5\succ \cdots\succ m$ to the end of every ranking to be flipped in the violation template.
\item Every edge in Figure~\ref{fig:proof} is replaced by a sequence of $\lceil\frac{\lceil\sqrt n\rceil}{B}\rceil$ operations: each of the first $\lceil\frac{\lceil\sqrt n\rceil}{B}\rceil-1$ operations flips $B$ rankings, and the last flips $\lceil\sqrt n\rceil - B\cdot\left(\lceil\frac{\lceil\sqrt n\rceil}{B}\rceil-1\right)$ rankings. 
\end{itemize}
\begin{figure}[htp]
\centering  
\includegraphics[width = \textwidth]{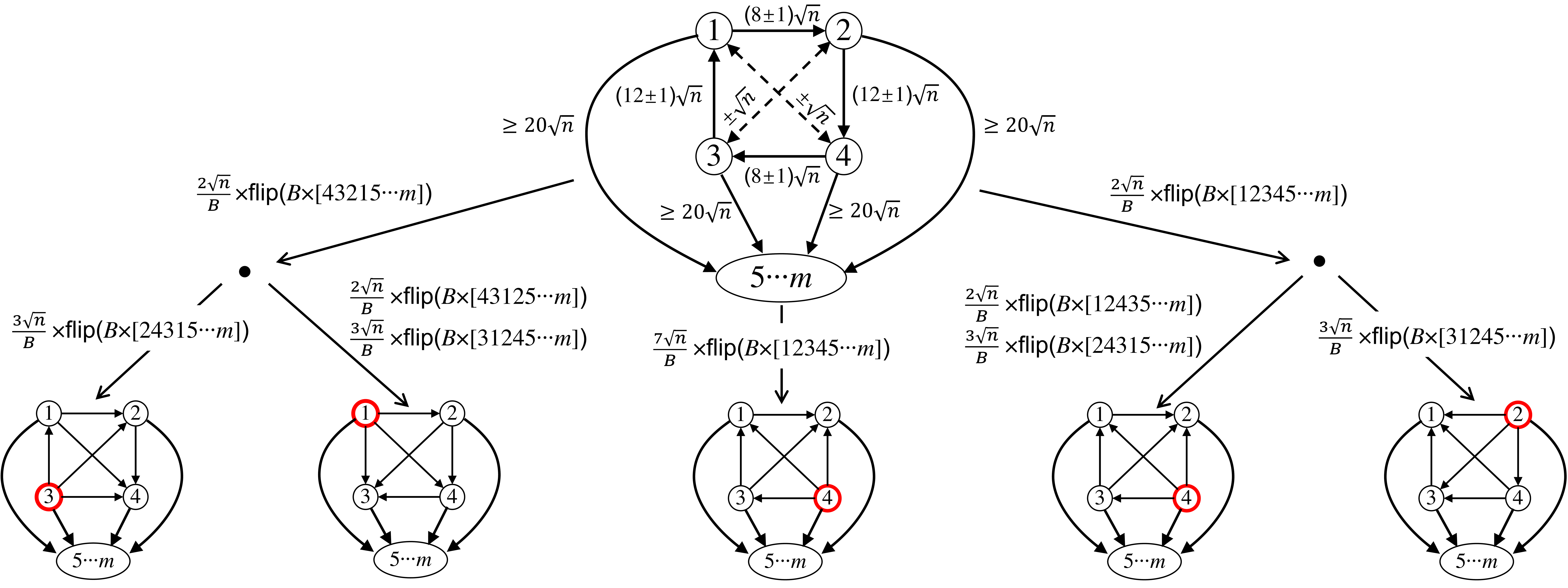}
\caption{Violation template for $\CnP$, general case.\label{fig:VD-general}}
\end{figure}
\myparagraph{Third,} $\frac{n}{m!}\cdot \vec 1$ is replaced by an arbitrary (but fixed) integer $\vec u = (u_1,\ldots,u_{m!})$ via rounding, such that $|\vec u - \frac{n}{m!}\cdot \vec 1|_\infty\le 1$; $\left(\frac{n}{m!}\right)!$ is replaced by $\prod_{i=1}^{m!}(u_i!)$; and $\frac{n}{m!}+\alpha\sqrt n$ for any $\alpha$ appeared in the proof above is replaced by $\left\lceil\frac{n}{m!}\right\rceil+\alpha\sqrt n$.
\end{proof}

\section{Other Semi-Random Impossibilities}
\label{sec:other-imp}
The proof of Theorem~\ref{thm:CC+Par} can leverage any proof diagram like Figure~\ref{fig:proof}, that has the following three high-level features:
\begin{enumerate}
\item The diagram consists of constantly many nodes.
\item The diagram works all slightly perturbed root profiles. 
\item $X$ is violated in each violation tree (scaled by $\sqrt n$) .
\end{enumerate}
Therefore,  any existing proof diagram for $\CnP$, e.g., the one used in~\citep{Brandt2017:Optimal}, can be used to prove Theorem~\ref{thm:CC+Par}. We chose the diagram in~\citep[Chapter 1]{Peters2019:Fair} for its simplicity.  

In this section, we prove semi-random impossibilities for the other three combinations of axioms, i.e., $\CnH$, $\CnM$, and $\CnS$, by leveraging existing proof diagrams that have the three aforementioned features. 
\begin{thm}[\bf $\CC$+half-way monotonicity]
\label{thm:CC+HM}
For any fixed $m\ge 4$, any $\Pi$ that satisfies Assumption~\ref{asmp:strict},   any voting rule $r$, any $n\ge 24$,  and any $1\le B\le \sqrt n$,  
$$\satmin{\CnH}{\Pi}(r,n, B) = 1- \Omega\left(\frac{B}{\sqrt n}\right)$$
\end{thm}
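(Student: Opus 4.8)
The plan is to follow the exact two-step template established in the proof of Theorem~\ref{thm:CC+Par}, since $\CnH$ shares the same high-level structure: I would reuse the machinery of \emph{violation templates}, \emph{violation trees}, and the \emph{rotated template} counting argument essentially verbatim, changing only the underlying proof diagram. The key observation is that the proof of Theorem~\ref{thm:CC+Par} only used three abstract features of the $\CnP$ diagram (Figure~\ref{fig:proof}), listed in Section~\ref{sec:other-imp}: the diagram has constantly many nodes, it works for all slightly perturbed root profiles, and $X$ is violated in each scaled-up violation tree. Thus the task reduces to (i) locating an appropriate worst-case proof diagram for the incompatibility of $\CC$ and $\HM$, and (ii) checking that the analog of Claim~\ref{claim:tree-violation} goes through for $\HM$ instead of $\Par$.

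First I would invoke the worst-case impossibility of~\citet{Sanver2009:One-way} (and the SAT-solver-based tightening by~\citet{Peters2017:Condorcets} that pins down the threshold $n\ge 24$) to obtain a fixed proof diagram $D_{\HM}$ with constantly many nodes, in which each edge corresponds to a voter flipping (reversing) his or her vote, and along which a violation of either $\CC$ or $\HM$ is guaranteed. I would then define the violation template exactly as in Definition~\ref{dfn:vio-diagram}, scaling each single-vote flip in $D_{\HM}$ into a sequence of $\frac{\sqrt n}{B}$ operations of the form $\flip{B\times R}$, and define $\calP_n$ (Definition~\ref{dfn:Pn}) as the set of $n$-profiles whose WMG is within $\sqrt n$ of $\sqrt n$ times the root WMG of $D_{\HM}$ and whose histogram is within $O(\sqrt n)$ of $\frac{n}{m!}\cdot\vec 1$. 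The combinatorial Claims~\ref{claim:vio-tree-num}, \ref{claim:num-rotated-trees}, \ref{claim:hist-conversion}, and \ref{claim:Pn} are all purely about counting profiles and trees and do not reference the specific axiom; they transfer without change, yielding the same ratio $\frac{|\calV_{n,B}|}{|\calP_n|} = \Omega(\frac{B}{\sqrt n})$ as in~\eqref{eq:vpratio}, and Step~2 (the averaging over $\calS_n$ and the point-wise concentration bound from~\citep[Lemma~1]{Xia2021:How-Likely}) is then literally identical, producing $\satmin{\CnH}{\Pi}(r,n,B) = 1-\Omega(\frac{B}{\sqrt n})$.

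The one genuinely new piece of work is the analog of Claim~\ref{claim:tree-violation}: I must show that every violation tree contains a profile where $\CC$ or $\HM_B$ is violated, converting the single-vote-flip guarantee of $D_{\HM}$ into a $B$-voter group guarantee. The argument mirrors the $\Par_B$ case but is slightly cleaner, because $\HM$ is itself defined in terms of \emph{flipping} (reversing) votes, so a scaled edge flipping $B$ copies of $R$ directly realizes a group half-way manipulation by $B$ voters. Concretely, $D_{\HM}$ guarantees a pair $P_1,P_2$ on the tree where $P_2$ is obtained from $P_1$ by flipping $B$ votes of some $R$ to $\rev R$, with either $P_2$ violating $\CC$ or $r(P_2)\succ_R r(P_1)$; in the latter case the $B$ voters at $P_1$ holding $R$ have incentive to reverse their votes (reaching $P_2$), which is exactly a violation of $\HM_B$ at $P_1$ by definition. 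Either way a violation lands on the tree.

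The main obstacle I anticipate is step~(i): ensuring the chosen diagram $D_{\HM}$ actually satisfies the robustness feature, namely that it ``works for all slightly perturbed root profiles,'' which is what underwrites the use of the whole neighborhood $\calP_n$ rather than a single root. The $\CnP$ diagram of~\citet{Peters2019:Fair} has this property because its correctness depends only on the signs/approximate magnitudes of the WMG edges, which are preserved under $O(\sqrt n)$ histogram perturbations; I would need to verify (or adapt) the $\HM$ diagram so that every branching decision it makes depends only on such coarse WMG data and on the identity of the current winner, never on exact vote counts. If the off-the-shelf $\HM$ diagram branches on finer information, I would rescale and pad it (appending a fixed suffix $5\succ\cdots\succ m$ and enforcing large weights on edges into $\{5,\ldots,m\}$, exactly as in the general case of Theorem~\ref{thm:CC+Par} in Figure~\ref{fig:VD-general}) to isolate the relevant alternatives and make it robust. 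Once robustness is confirmed, everything else is a faithful replay of the Theorem~\ref{thm:CC+Par} argument, and the general-case modifications (integrality rounding of $\frac{n}{m!}\cdot\vec 1$ to $\vec u$, and handling $m>4$) carry over unchanged.
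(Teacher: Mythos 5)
Your proposal has the same skeleton as the paper's proof: the two-step structure, the violation-template/violation-tree/rotated-template counting, Step 2's permutation-averaging plus the point-wise concentration bound, and even your key observation that a flip of $B$ votes of $R$ that improves the winner with respect to $R$ is, by definition, a violation of $\HM_B$ at the source profile---which is essentially word-for-word the paper's replacement for Claim~\ref{claim:tree-violation}. The one place you deviate, however, is a genuine gap: you make the entire argument contingent on extracting a proof diagram $D_{\HM}$ from the worst-case $\CC$+$\HM$ literature (Sanver--Zwicker 2009, Peters 2017) that has constantly many nodes, flip-labeled edges, and---crucially---correctness under all $O(\sqrt n)$-perturbed root profiles. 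You never exhibit such a diagram, and you yourself flag the robustness verification as the main obstacle; the SAT-derived proofs are families of specific small profiles rather than diagrams whose branching depends only on coarse WMG data and the current winner, so ``rescale and pad it'' is a plan, not a proof. As written, the argument is incomplete at exactly its only new ingredient.

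The paper closes this gap with an observation that is already implicit in your own argument: no $\HM$-specific diagram is needed, because the violation template you would reuse from Theorem~\ref{thm:CC+Par} (Figure~\ref{fig:violation-template}) already consists entirely of flip operations, and the guarantee established in the proof of Claim~\ref{claim:tree-violation} (every violation tree contains a pair $P_1,P_2$ with $P_2$ obtained from $P_1$ by flipping $B$ votes of $R$, such that either $P_2$ violates $\CC$ or $r(P_2)\succ_R r(P_1)$) is precisely the precondition of your $\HM_B$ argument. In other words, the scaled $\CnP$ diagram \emph{is} the $D_{\HM}$ you are looking for: its robustness over $\calP_n$ and all of Claims~\ref{claim:vio-tree-num}, \ref{claim:num-rotated-trees}, \ref{claim:hist-conversion} and~\ref{claim:Pn} are already established, and the $\HM$ case is in fact \emph{easier} than the $\Par$ case, since the latter needed the additional voter-removal argument to convert a beneficial flip into an abstention incentive, whereas here the flip itself is the half-way manipulation. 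Substituting Figure~\ref{fig:violation-template} for your hypothetical $D_{\HM}$ removes the unverified step and collapses your proposal into the paper's proof of Theorem~\ref{thm:CC+HM}.
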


\begin{proof}
The proof is similar to the proof of Theorem~\ref{thm:CC+Par} and uses the same violation template (Figure~\ref{fig:violation-template}). The main difference is that Claim~\ref{claim:tree-violation} is replaced by the following claim.
\begin{claimnew}  For every $P\in \calP_{n}$ and every  $T\in \calT_{P,B}$,  $\CC$ or $\HM_B$ is violated in $T$.
\end{claimnew}
The proof of this claim  is straightforward: suppose there exists a pair of profiles $P_1, P_2$ on a violation tree, such that $P_2$ is obtained from $P_1$ by flipping $B$ votes of $R$,  where $r(P_2)\succ_R r(P_1)$, then $\HM_B$ fails at $P_1$. 
\end{proof}
%
%


\begin{thm}[\bf $\CC$+Maskin monotonicity]
\label{thm:CC+MM}
For any fixed $m\ge 3$, any  $\Pi$ that satisfies Assumption~\ref{asmp:strict},   any voting rule $r$, any $n\in\mathbb N$,  and any $1\le B\le \sqrt n$,  
$$\satmin{\CnM}{\Pi}(r,n, B) = 1- \Omega\left(\frac{B}{\sqrt n}\right)$$
\end{thm}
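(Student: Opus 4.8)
The plan is to mirror the proof of Theorem~\ref{thm:CC+Par} essentially verbatim, exploiting the fact that almost all of its machinery is axiom-agnostic. The entire Step~2 argument---the averaging over $\calS_n$ in \eqref{eq:Vnb}--\eqref{eq:hist-P}, Claim~\ref{claim:hist-conversion}, the point-wise concentration estimate \eqref{eq:hist-lower}, and the lower bound on $|\calP_n|$ in Claim~\ref{claim:Pn}---uses only two structural properties of the violation template: that it has constantly many nodes, and that every profile on a violation tree lies in an $O(\sqrt n)$ neighborhood of $\frac{n}{m!}\cdot\vec1$. Likewise, the counting Claims~\ref{claim:vio-tree-num} and~\ref{claim:num-rotated-trees}, the rotated-template device, and the ratio \eqref{eq:vpratio} never refer to the axiom. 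Consequently, to establish $\satmax{\neg(\CnM)}{\Pi}(r,n,B)=\Omega(\frac{B}{\sqrt n})$ (and hence the theorem, via $\satmin{\CnM}{\Pi}=1-\satmax{\neg(\CnM)}{\Pi}$), I only need to replace two axiom-specific ingredients: the worst-case proof diagram that seeds the template, and the per-tree violation guarantee, i.e.\ the analog of Claim~\ref{claim:tree-violation}.

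For the diagram, I would take a constant-size worst-case diagram witnessing the $\CC+\MM$ impossibility for $m=3$, obtainable from the Muller--Satterthwaite theorem~\citep{Muller77:SPA}, and extend it to $m\ge3$ exactly as in the general case of Theorem~\ref{thm:CC+Par}: append the fixed tail $4\succ\cdots\succ m$ to every ballot and require every edge from $\{1,2,3\}$ to $\{4,\dots,m\}$ to have weight at least $20\sqrt n$ at the root, so that alternatives outside $\{1,2,3\}$ never become Condorcet winners and play no role. The essential change is that each edge now encodes the $\MM$ operation rather than a reversal: conditioned on the current (highlighted) winner $w$, the edge raises the position of $w$ in a block of votes, implemented as $\frac{\sqrt n}{B}$ successive operations each raising $w$ in $B$ identical ballots. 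Scaling all multiplicities by $\sqrt n$ yields the violation template, and $\calP_n$ is redefined as the $n$-profiles whose WMG matches the scaled root up to $\pm\sqrt n$ and whose histogram is within $4\sqrt n$ of $\frac{n}{m!}\cdot\vec1$; since the worst-case impossibility holds for $m\ge3$ and every $n$, $\calP_n$ is nonempty for all $n$ and all $1\le B\le\sqrt n$.

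The replacement for Claim~\ref{claim:tree-violation} asserts that for every $P\in\calP_n$ and every $T\in\calT_{P,B}$, either $\CC$ or $\MM_B$ is violated in $T$. The extraction is the $\MM$-analog of the one-line $\CnH$ argument: trace the rule's winner through $T$ along the diagram's branching on $r$; if some edge raises the current winner $w$ in at most $B$ votes and the winner changes, then a coalition of at most $B$ voters has raised $w$ yet altered the outcome, which is precisely an $\MM_B$ violation at the source profile. If instead no raise-edge ever changes the winner, then $\MM$ pins the winner along the relevant branch, and the diagram is constructed so that this forces, at some Condorcet-pinned node, a winner distinct from that node's Condorcet winner---an $\CC$ violation. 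Feeding this claim into \eqref{eq:vpratio} and the Step~2 argument then delivers the bound.

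The main obstacle is designing the worst-case $\CC+\MM$ diagram itself and verifying its robustness to an $O(\sqrt n)$ perturbation of the root. Unlike a participation chain, where the simulated added or removed votes can move the Condorcet winner freely, raising the current winner can never promote a \emph{different} alternative to Condorcet winner; hence the $\MM$ diagram cannot be a single raise-chain and must instead branch at profiles with no Condorcet winner and reconnect, in the style of the Muller--Satterthwaite cycle of forced-winner constraints. The delicate point is ensuring that every pairwise margin exploited along the diagram is $\Omega(\sqrt n)$, so that the conclusion survives both the $\pm O(\sqrt n)$ slack in $\calP_n$ and the at most $7\sqrt n$ votes modified on any tree; the $\ge 20\sqrt n$ outside-domination margin and the $\sqrt n$-scaling are exactly what make this robustness go through, as in Figure~\ref{fig:VD-general}. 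As in the other theorems, the hidden constant in $\Omega(\frac{B}{\sqrt n})$ may be exponential in $m$ but is independent of $n$, $B$, and $r$.
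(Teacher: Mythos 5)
Your high-level skeleton does match the paper's proof: keep Step~2 and the counting machinery of Theorem~\ref{thm:CC+Par} intact, and swap in an axiom-specific diagram together with a per-tree violation claim. Your extraction argument (if a raise-edge changes the winner, $\MM_B$ is violated at the source profile; otherwise $\MM$ pins the winner along the branch and some node exhibits a $\CC$ violation) is exactly the paper's replacement for Claim~\ref{claim:tree-violation}. The gap is that the one genuinely new ingredient---the worst-case $\CC$+$\MM$ diagram itself---is never constructed: you explicitly defer it as ``the main obstacle,'' so the proposal does not yet prove the theorem.

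Moreover, the reason you give for why this construction is delicate rests on a false premise. You assert that raising the current winner can never promote a \emph{different} alternative to Condorcet winner. Under the paper's (standard Maskin) notion of raising, only the winner's lower contour set must weakly expand---the remaining alternatives may be rearranged arbitrarily---so an alternative $b$ that already beats the current winner $w$ pairwise can be made to beat everything else as well. The paper's diagram (Figure~\ref{fig:VD-SP}) exploits exactly this: the root profiles in $\calP_n^{\CnM}$ form a near-tied cycle $1\ra 2\ra 3\ra 1$ with weights in $[0,\sqrt n]$; if $r$ picks $1$ at the root, one flips $B$ votes at a time from $[2\succ 3\succ 1]$ to $[3\succ 2\succ 1]$. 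Alternative $1$ stays at the bottom of these votes, so each flip weakly raises $1$ and $\MM_B$ forces the winner to remain $1$; yet the $2$ vs.\ $3$ margin reverses by roughly $2\sqrt n$, making $3$ the Condorcet winner at the leaf and yielding a $\CC$ violation (the other two branches handle $r$ picking $2$ or $3$). Your premise would rule out precisely this three-raise-chain construction, and the alternative you gesture at (``branch at profiles with no Condorcet winner and reconnect, in the style of the Muller--Satterthwaite cycle of forced-winner constraints'') is left unspecified; it is not clear how a reconnecting diagram would deliver the per-tree violation guarantee that the counting argument feeding into \eqref{eq:vpratio} requires. So the misconception about $\MM$ is not incidental: it blocks construction of the key object on which the whole argument rests.
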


\begin{proof}
The proof follows the same pattern as that of the proof of Theorem~\ref{thm:CC+Par}. For simplicity, we present the proof for the case where  $m=3$, $\sqrt n$ is an integer,  $B\mid \sqrt n$, and $m!\mid n$. 
The main difference is that we use a different proof diagram whose corresponding violation template is illustrated in Figure~\ref{fig:VD-SP}.

\begin{figure}[htp]
\centering  
\includegraphics[width = .7\textwidth]{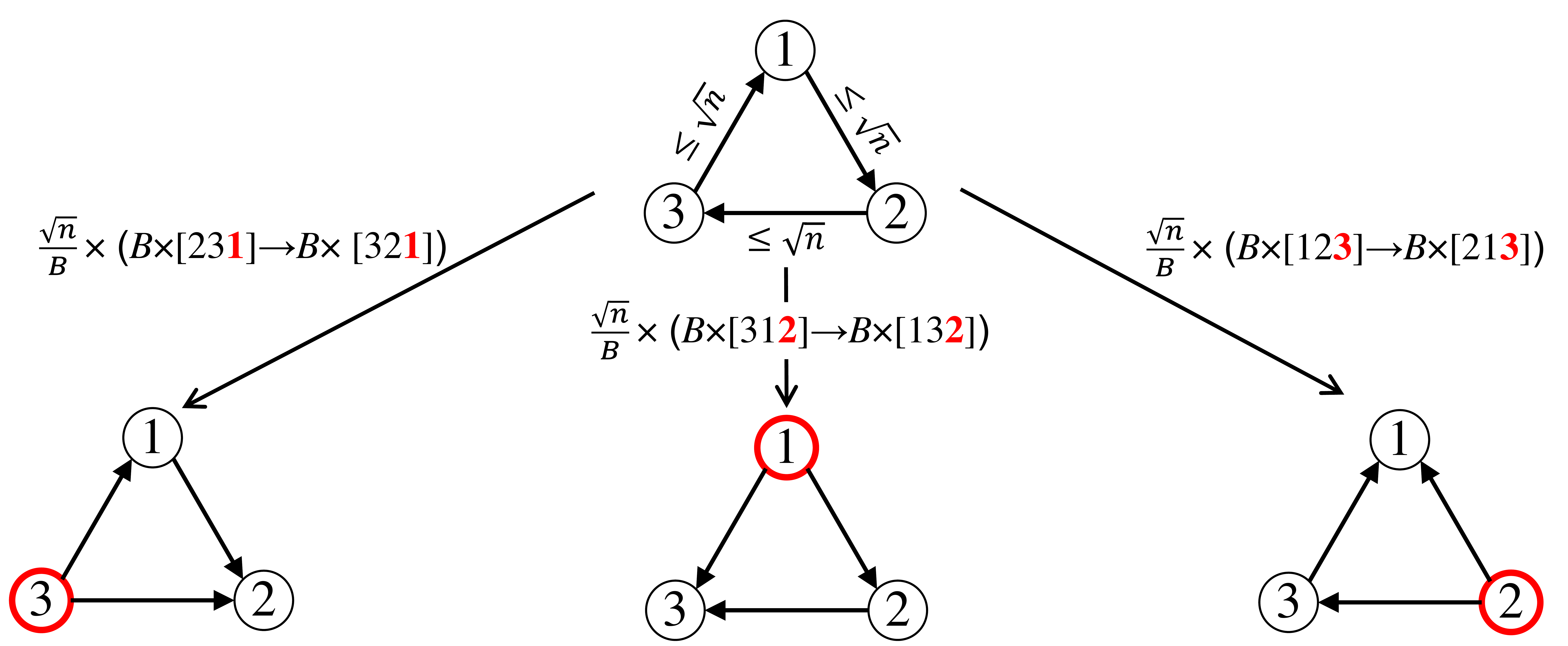}
\caption{Violation template for $\CnM$.\label{fig:VD-SP}}
\end{figure}

Formally, there are three branch starting from the root, each of which consists of $\frac{\sqrt n}{B}$ operations, and each operation consists of $B$ voters of the same vote $R_1$ changing to the same vote $R_2$ collaboratively: \begin{itemize}
\item for the first branch $R_1 = [231]$ and $R_2 = [321]$; 
\item for the second branch $R_1 = [312]$ and $R_2 = [132]$;  
\item for the third branch $R_1 = [123]$ and $R_2 = [213]$.
\end{itemize}
Let $\calP_n^\text{\CnM}$ denote the set of $n$-profiles $P$, such that the weights on $1\ra 2$, $2\ra 3$, and $3\ra 1$ in $P$'s WMG are between $0$ and $\sqrt n$. The violation template for $\CnM$ will be applied to profiles in $\calP_n^\text{\CnM}$, leading to violation trees for $\CnH$, denoted by $\calT_{n,B}^{\CnM}$. We then prove the following claim.
\begin{claimnew}  For every $P\in \calP_{n}^{\CnM}$ and every  $T\in \calT_{P,B}^{\CnM}$,  $\CC$ or $\MM_B$ is violated in $T$.
\end{claimnew}
To see why this claim holds, suppose $1$ is the winner of the root profile. We consider the left branch of the violation template in Figure~\ref{fig:VD-SP}. If the winner changes after any operation (which consists of $B$ votes of $[231]$ changing to $[321]$), then $\MM_B$ is violated. If $\MM_B$ is never violated on the left branch, then $1$ is the winner in the leaf node, which violates $\CC$ as $3$ is the Condorcet winner. The proof for $2$ (respective, $3$) being the winner is similar, by considering the middle (respectively, right) branch of Figure~\ref{fig:VD-SP}.

The rest of the proof, including the extension of the proof to the general case, is similar to the proof of Theorem~\ref{thm:CC+Par}.
\end{proof}


\begin{thm}[\bf $\CC$+strategy-proofness]
\label{thm:CC+SP}
For any fixed $m\ge 3$, any  $\Pi$ that satisfies Assumption~\ref{asmp:strict},  any voting rule $r$, any $n\in\mathbb N$,  and any $1\le B\le \sqrt n$,  
$$\satmin{\CnS}{\Pi}(r,n, B) = 1- \Omega\left(\frac{B}{\sqrt n}\right)$$
\end{thm}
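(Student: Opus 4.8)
The plan is to run the two-step argument of Theorem~\ref{thm:CC+Par} almost verbatim, reusing the apparatus already set up for Theorem~\ref{thm:CC+MM}. Concretely, I would keep the violation template of Figure~\ref{fig:VD-SP} (whose three branches each apply $\frac{\sqrt n}{B}$ operations, every operation having $B$ voters switch from $R_1$ to $R_2$), the near-cyclic root class $\calP_n^{\CnM}$, and the resulting violation trees $\calT_{P,B}^{\CnM}$. The two counting claims --- the lower bound on $|\calT_{P,B}^{\CnM}|$ coming from the $\binom{n_R}{B}$ choices of coalition at each operation, and the upper bound on how many trees pass through a fixed profile via the rotated template --- never refer to the axiom being certified, so they transfer unchanged and again give $|\calV_{n,B}|/|\calP_n^{\CnM}| = \Omega(B/\sqrt n)$ through the double count in~\eqref{eq:vpratio}. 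Step~2 is then identical line for line: pick $\vec\pi\in\Pi^n$ with $\sum_j\pi_j$ within $O(1)$ of $\frac{n}{m!}\vec 1$, average over the $n!$ permutations to pass to histograms (Claim~\ref{claim:hist-conversion}), and apply the point-wise concentration bound~\citep[Lemma~1]{Xia2021:How-Likely}.

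The only genuinely new ingredient is the per-tree violation claim: for every $P\in\calP_n^{\CnM}$ and every $T\in\calT_{P,B}^{\CnM}$, $\CC$ or $\SP_B$ is violated in $T$. I would prove it by the same branch-by-branch argument used for $\MM$, now reading each template operation as a coalitional manipulation and exploiting the design feature that on each branch the candidate winner sits at the \emph{bottom} of the abandoned vote $R_1$. Take the left branch, where operations switch $B$ votes from $R_1=[231]$ to $R_2=[321]$ and suppose $1=r(P)$; note $1$ is last in $[231]$. Either $\SP_B$ fails somewhere on $T$ and we are done, or $\SP_B$ holds at every profile of the branch, in which case a short induction forces the winner to stay $1$ to the leaf: whenever $r(P_1)=1$, the $B$ voters currently casting $[231]$ (of which $\Theta(n/m!)\gg B$ remain at every node, exactly as in the proof of Claim~\ref{claim:vio-tree-num}) can deviate to $[321]$ and reach the next profile $P_2$, and were $r(P_2)\neq 1$ we would have $r(P_2)\succ_{[231]}1$, a profitable $\le B$-voter manipulation contradicting $\SP_B$ at $P_1$; hence $r(P_2)=1$. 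At the leaf the winner is then $1$ while $3$ is the Condorcet winner, so $\CC$ is violated. The middle and right branches are symmetric, since $2$ is last in $[312]$ and $3$ is last in $[123]$.

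The crux --- and the closest thing to an obstacle --- is precisely this reinterpretation: one must check that each operation of the $\MM$ template is simultaneously a legitimate $\le B$-voter manipulation, which is where the Muller--Satterthwaite equivalence between $\MM$ and $\SP$ is being cashed in, and which holds exactly because the soon-to-change winner is bottom-ranked in $R_1$, so \emph{any} change of winner strictly helps the deviators. One also checks that the manipulating coalition always exists, i.e.\ that enough $R_1$-voters survive at every node: at most $\sqrt n$ of them are converted along a branch of $\frac{\sqrt n}{B}$ operations, leaving $\Theta(n/m!)\gg B$. With the claim established, all remaining pieces --- including the general-$m$ modifications (appending the tail $4\succ\cdots\succ m$ to the rankings involved, rounding $\frac{n}{m!}\vec 1$ to an integer vector $\vec u$, and splitting each edge into $\lceil\lceil\sqrt n\rceil/B\rceil$ operations) --- mirror the corresponding steps of Theorem~\ref{thm:CC+Par} and Theorem~\ref{thm:CC+MM}, yielding $\satmin{\CnS}{\Pi}(r,n,B)=1-\Omega(B/\sqrt n)$.
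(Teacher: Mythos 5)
Your proposal is correct and takes essentially the same route as the paper's own proof: it reuses the $\CnM$ violation template of Figure~\ref{fig:VD-SP} with root class $\calP_n^{\CnM}$, keeps the counting and concentration machinery of Theorem~\ref{thm:CC+Par} untouched, and replaces only the per-tree claim by the observation that any change of winner along a branch is a profitable $\le B$-voter manipulation (since the highlighted winner is bottom-ranked in $R_1$), so if $\SP_B$ is never violated the winner persists to the leaf where $\CC$ fails. Your branch-by-branch induction is simply a more explicit rendering of the paper's one-line argument; the paper additionally notes that the $m\ge 4$ case already follows from Theorem~\ref{thm:CC+HM} because $\SP$ is stronger than $\HM$, but that shortcut is not needed.
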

\begin{proof}  When $m\ge 4$, Theorem~\ref{thm:CC+SP} follows after Theorem~\ref{thm:CC+HM}, as $\SP$ is stronger than $\HM$. The proof (for all $m\ge 3$)   uses the same violation template in the proof of Theorem~\ref{thm:CC+MM} (Figure~\ref{fig:VD-SP}). The main difference is to prove that each violation tree generated by the violation template contains a profile where $\CC$ or $\SP_B$ is violated, i.e., the following claim. 
\begin{claimnew}  For every $P\in \calP_{n}^{\CnM}$ and every  $T\in \calT_{P,B}^{\CnM}$,  $\CC$ or $\SP_B$ is violated in $T$.
\end{claimnew}
To see why the claim holds, notice that for any profile $P$ that is not a leaf in Figure~\ref{fig:VD-SP}, if the winner is different from the highlighted alternative, then $B$ voters have incentive to improve the winner at the previous profile. If $\SP_B$ is never violated at all profiles on the tree, then $\CC$ is violated at a leaf node. 
\end{proof}

Recall that IC corresponds to $\Pi=\{\piuni\}$. Therefore, all semi-random impossibilities in this paper hold for IC. 
\begin{coro}[\bf Quantitative Impossibilities under IC]
\label{coro:IC}
For any $X\in \{\CnP,\CnH, \CnM, \CnS\}$, any $m\ge 4$ ($m\ge 3$ for $\CnM$ and $\CnS$),   any voting rule $r$, any sufficiently large $n\in\mathbb N$,  and any $1\le B\le \sqrt n$,  

$\hfill\Pr\nolimits_{P\sim\text{IC}} X(r,P, B) = 1- \Omega\left(\frac{B}{\sqrt n}\right)\hfill$
\end{coro}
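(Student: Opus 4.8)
The plan is to derive Corollary~\ref{coro:IC} as an immediate consequence of the four theorems, using the observation already stated in the excerpt that Impartial Culture corresponds to the singleton set of distributions $\Pi_{\text{IC}}=\{\piuni\}$. First I would verify that $\Pi_{\text{IC}}$ satisfies Assumption~\ref{asmp:strict}: strict positivity holds because $\piuni(R)=\frac{1}{m!}>0$ for every $R\in\ml(\ma)$, so we may take $\epsilon=\frac{1}{m!}$; closedness is trivial since a single point is a closed subset of the simplex; and the third condition $\piuni\in\conv(\Pi_{\text{IC}})$ holds with equality since $\conv(\{\piuni\})=\{\piuni\}$. Thus each of Theorems~\ref{thm:CC+Par}--\ref{thm:CC+SP} applies with $\Pi=\Pi_{\text{IC}}$.

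The second and central step is to unfold the definition of $\satmin{X}{\Pi_{\text{IC}}}(r,n,B)$ from Equation~\eqref{dfn:s-sat} when $\Pi=\Pi_{\text{IC}}$. Because $\Pi_{\text{IC}}^n=\{(\piuni,\ldots,\piuni)\}$ is itself a single point, the infimum in \eqref{dfn:s-sat} is taken over a one-element set and therefore degenerates to the value at that point. Concretely, the only admissible $\vec\pi\in\Pi_{\text{IC}}^n$ is $\vec\pi=(\piuni,\ldots,\piuni)$, under which $P\sim\vec\pi$ is exactly a profile drawn i.i.d.\ from the uniform distribution, i.e.\ $P\sim\text{IC}$. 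Hence I would record the identity
$$\satmin{X}{\Pi_{\text{IC}}}(r,n,B)=\Pr\nolimits_{P\sim\text{IC}}\left(X(r,P,B)=1\right),$$
which converts the semi-random satisfaction into the plain IC-likelihood appearing in the corollary.

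Finally I would combine the two steps: for each $X\in\{\CnP,\CnH,\CnM,\CnS\}$ the corresponding theorem gives $\satmin{X}{\Pi_{\text{IC}}}(r,n,B)=1-\Omega\!\left(\frac{B}{\sqrt n}\right)$ under the stated ranges of $m$, $n$, and $B$, and substituting the identity above yields $\Pr_{P\sim\text{IC}}\,X(r,P,B)=1-\Omega\!\left(\frac{B}{\sqrt n}\right)$, which is exactly the claim. I would be careful to transcribe the parameter ranges correctly: $m\ge 4$ for $\CnP$ and $\CnH$ but $m\ge 3$ for $\CnM$ and $\CnS$, and the ``sufficiently large $n$'' in the corollary subsumes the concrete thresholds ($n\ge 12$, $n\ge 24$, or all $n$) from the individual theorems.

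I do not anticipate a genuine obstacle here, since the corollary is purely a specialization; the only thing requiring care is the degeneracy argument for the infimum. The subtle point worth making explicit is that taking $\Pi$ to be a singleton collapses the adversary's choice set, so the ``worst-case over distributions'' built into $\satmin{X}{\Pi}$ carries no force and the semi-random quantity literally equals the IC probability—this is precisely why the semi-random bound transfers without loss to the quantitative IC setting, and it is also the sense in which, as noted after the definition of semi-random satisfaction, IC is the special case $\Pi=\{\piuni\}$.
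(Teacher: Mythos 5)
Your proposal is correct and matches the paper's own (much terser) argument: the paper simply notes that IC corresponds to $\Pi=\{\piuni\}$ and that all four semi-random theorems therefore specialize to IC, which is exactly your degeneracy-of-the-infimum argument spelled out. Your explicit verification of Assumption~\ref{asmp:strict} for the singleton $\{\piuni\}$ is a detail the paper handles earlier (in Section~\ref{sec:prelim}) rather than inside the corollary, but the substance is identical.
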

Corollary~\ref{coro:IC} also implies that for any voting rule that satisfies $\CC$, the likelihood for $r$ to satisfy 
$\Par$, $\HM$, $\MM$, or $\SP$, respectively, is   $\Omega\left(\frac{B}{\sqrt n}\right)$ under IC.

\section{Summary and Future Work}
We prove the first set of semi-random impossibility results involving $\CC$, showing that many existing voting rules are already optimal for $\CnP$ (for $B=1$) and $\CnS$ (for every $B\le \sqrt n$). The proof technique has potential to strengthen other worst-case impossibilities to their semi-random variants. For future work, we conjecture that all bounds for the axioms are tight and can be achieved by many rules that satisfy $\CC$. Other promising directions include  addressing the limitations discussed in Section~\ref{sec:related} and proving semi-random variants of other worst-case impossibility results, such as  Arrow's, Gibbard-Satterthwait (for non-$\CC$ rules), and various impossibility theorems in judgement aggregation. The proof technique developed in this paper (see Section~\ref{sec:other-imp}) does not seem to be directly applicable, because existing proofs use diagrams that contains $\Theta(n)$ nodes.

{
\bibliographystyle{plainnat}
\bibliography{/Users/administrator/GGSDDU/references}
}

\end{document}